\newcolumntype{.}{D{.}{.}{-1}}
\newcolumntype{d}[1]{D{.}{.}{#1}}
\patchcmd{\@maketitle}{\LARGE}{\large}{}{}
\titleformat{\subsection}[runin]
  {\normalfont\normalsize \bfseries}{\thesubsection}{0.5em}{}
  \titlespacing{\subsection}{1\parindent}{*0}{0.5em}
  \titleformat{\subsubsection}[runin]
  {\normalfont\normalsize \bfseries}{\thesubsubsection}{0.5em}{}
  \titlespacing{\subsubsection}{1\parindent}{*0}{0.5em}
  \titlespacing*{\section}{0em}{0.5em}{0.5em}
\newtheorem{proposition}{Proposition}
\newtheorem{lemma}{Lemma}
\newtheorem{lemma*}{Lemma}
\newtheorem{definition}{Definition}
\newtheorem{theorem*}{Theorem}
\title{\vspace{-2em} \textbf{Re-evaluating the impact of hormone replacement therapy on heart disease using match-adaptive randomization inference}}
\author{Samuel D. Pimentel \&  Ruoqi Yu\thanks{Authors listed in alphabetical order (equal contribution). 
Samuel D. Pimentel is supported by the National Science Foundation under Grant No. 2142146. Part of this research was performed while Samuel D. Pimentel was visiting the Institute for Mathematical and Statistical Innovation (IMSI), which is also supported by the National Science Foundation (Grant No. DMS-1929348).}}
\date{}
\begin{document}

\maketitle


\begin{abstract}
Matching is an appealing way to design observational studies because it mimics the data structure produced by stratified randomized trials, pairing treated individuals with similar controls. After matching, inference is often conducted using methods tailored for stratified randomized trials in which treatments are permuted within matched pairs. However, in observational studies, matched pairs are not predetermined before treatment; instead, they are constructed based on observed treatment status.  This introduces a challenge as the permutation distributions used in standard inference methods do not account for the possibility that permuting treatments might lead to a different selection of matched pairs ($Z$-dependence).
To address this issue, we propose a novel and computationally efficient algorithm that characterizes and enables sampling from the correct conditional distribution of treatment after an optimal propensity score matching, accounting for $Z$-dependence.  We show how this new procedure, called match-adaptive randomization inference, corrects for an anticonservative result in a well-known observational study investigating the impact of hormone replacement theory (HRT) on coronary heart disease and corroborates experimental findings about heterogeneous effects of HRT across different ages of initiation in women. {\it Keywords:} matching, causal inference, propensity score, permutation test, Type I error, graphs.\end{abstract}

\doublespacing 
\section{Introduction}


A randomized trial is an ideal design for determining the causal effect of a treatment, since it ensures that units receiving the treatment and those receiving the control condition are comparable on average on all covariates, both observed and unobserved.  When only observational data is available, systematic differences between treated and control subjects must be addressed.  Matching designs proceed by pairing each unit receiving treatment to a control unit with similar observed covariates and comparing outcomes within the resulting matched pairs \citep{stuart2010matching}.  Intuitively, the aim is to create a new control subgroup that is similar 
 to the treated group on observed covariates, like the ideal control group in a hypothetical randomized trial. 
The analogy between matched designs in observational studies and stratified randomized trials extends to inference. It is common to permute treatment labels within matched pairs to conduct inference as has long been done for stratified randomized trials \citep{fisher1935design, rosenbaum2002observational}.  

However, permuting treatment labels in matched pairs from an observational study as if in a randomized trial relies on strong assumptions
, one of which is that pairs remain fixed even as treatment labels are permuted within them.  Although natural in stratified randomized trials where strata are chosen prior to treatment assignment, this assumption is suspect in matched observational studies, where matched pairs are explicitly a function of the observed treatment. 
Yet it has received little attention in the literature.  We demonstrate how ignoring the role of observed treatment in constructing matched pairs can lead to anticonservative permutation tests, and we propose a novel and computationally efficient inference algorithm that restricts attention to treatment permutations that would have produced the same set of matched pairs as the original study.  We call this new procedure match-adaptive randomization inference.  

We explore our new approach in a case study of a historically controversial question, the impact of hormone replacement therapy (HRT) on coronary heart disease (CHD) using observational data from the Women's Health Initiative study (WHI).  We show that match-adaptive randomization inference corroborates contemporary understanding, finding substantially less evidence for effects of HRT in the broad population than traditional methods of inference after matching, but confirming the presence of 
heterogeneous effects across different HRT initiation age groups. Overall, match-adaptive randomization inference provides a valuable safeguard in matched studies against subtle kinds of Type I error violations, calibrating inference procedures better to the ideal randomized trial.

\section{Background and literature review}

\subsection{The randomization inference paradigm in matched designs}

Matching offers several advantages compared to alternative approaches to adjusting for confounding, including a separation between a study's design stage 
and its analysis stage 
\citep{rubin2007design,rubin2008objective},
protection against extreme weights for individual subjects, 
and preservation of the original unit of analysis, which can facilitate 
complementary qualitative analysis \citep{rosenbaum2001matching,yu2021optimal}.  
At a high level, matching unites many beneficial attributes 
 of randomized trials 
	\citep{brown1980statistical}, and suggests the use of similar inference methods.   
Inference in a randomized trial can be conducted purely on the basis of the known distribution of treatment, holding covariates and potential outcomes fixed and requiring no distributional assumptions on outcomes besides the stable unit treatment value assumption (SUTVA), which prohibits  interference between subjects and 
hidden versions of treatment \citep{fisher1935design,rosenbaum2002observational}.  For stratified trials, this means permuting treatment indicators within strata.

While such ``randomization inference" methods are also applied to  matched observational studies in practice, using them in this context requires strong assumptions. In the language of  \cite{zhang2023randomization}, these are 
quasi-randomization tests.  We highlight three key assumptions:
(1) \emph{independent treatments}, requiring treatment status to be independent across individuals in the raw data prior to matching; 
(2) \emph{no unobserved confounding}, requiring the true probability of treatment 
to be a function of only measured covariates; 
and (3) \emph{exact matching}, under which matched units share identical propensity scores (conditional probabilities of treatment given covariates).
The first two assumptions, their possible violations, and the implications for a wide variety of causal inference approaches have received extensive discussion in the literature \citep{chang2022propensity,hansen2014clustered,rosenbaum1987sensitivity}. 
While more specific to the context of matching and less frequently discussed, the third assumption of exact matching {can also be} problematic.  
Even when propensity scores can be estimated consistently, 
exact matches are typically unattainable in practice except 
with few discrete covariates.  
 Failures of this assumption and the downstream adverse implications for inference are our primary focus.

\subsection{Addressing inexact matching}

Several researchers have recognized the implausibility of achieving exact matching on the true propensity score in finite samples and raised concerns regarding robustness of inference of violations to this assumption.  
\citet{hansen2009propensity} showed that in large sample settings where matched discrepancies 
shrink to zero, finite-sample bias shrinks slowly and may not vanish unless careful balance checking is applied; related work suggested the use of a shrinking propensity score caliper as a means to maintain accurate inference \citep{hansen2023matching}.  \citet{abadie2006large} found similar problems with inexact matching and proposed the use of regression models to re-establish valid inference \citep{abadie2011bias}.  \citet{saevje2021inconsistency} demonstrated that achieving convergence of matched discrepancies to zero becomes implausible when any units in the study population have propensity scores larger than 0.5. \cite{guo2022statistical} explored how this phenomenon leads to problems for randomization inference and suggested the use of regression models to rectify the problem.
However, caliper matching and regression modeling are imperfect solutions.  Calipers exclude treated subjects from the match, which hurts precision and alters the study population, making the study's estimand less interpretable.  Regression modeling requires assumptions about the study outcome, especially when matched pair discrepancies do not vanish asymptotically.

A different approach for correcting bias from inexact matching that does not require exclusion of treated units or outcome model assumptions is to modify the permutation procedure to account for propensity score differences within matched sets. Building on a brief initial proposal by \citet{baiocchi2011methodologies} and a growing recent literature on 
non-uniform permutation tests for unmatched studies  \citep{branson2019randomizationtests, berrett2020conditional, shaikh2021randomization}, \citet{pimentel2023covariate} 
introduced covariate-adaptive randomization inference, in which treatments are permuted within pairs with renormalized probabilities given by estimated propensity odds, 
which is valid even if  matched discrepancies do not vanish. Nevertheless, 
{these attractive properties of covariate-adaptive randomization inference rely on a key assumption that matched pairs are fixed across all permutations of treatment.} While reasonable for a randomization test in a paired randomized experiment (where matched pairs are determined prior to treatment assignment), this assumption is suspect in observational studies, where matched pairs are chosen based on the original observed treatment vector.  Consequently, some permutations of the treatment vector would have resulted in different matched pair configurations had they been observed originally. These treatment permutations do not appropriately belong to the conditional distribution of treatment, given that the original set of pairs was selected.  This phenomenon, labeled ``$Z$-dependence" by \citet{pimentel2023covariate}, was also discussed by \citet{pashley2021conditional}.  

In principle it is possible to remove incompatible permutations by repeating the propensity score match procedure for each permutation of treatment to see if the same pairs are recovered. However, in practice, this approach is computationally infeasible except in very small datasets, and more 
 efficient strategies are essential. 
In this paper, we introduce a novel and computationally efficient strategy called match-adaptive randomization inference to correct the permutation distribution and conduct valid hypothesis tests after matching.


\subsection{Hormone replacement therapy and women's health}\label{subsec:whi}

We use match-adaptive inference to study the effect of hormone-replacement therapy (HRT) on coronary heart disease (CHD) using data from the widely-recognized Women's Health Initiative study (WHI). Scientific debate over HRT's safety is decades old.  HRT was already widely used when the FDA approved estrogen for
 osteoporosis prevention in 1988 \citep{lobo2017hormone}, and during the subsequent period of even greater adoption, observational studies on HRT highlighted numerous benefits, including reduced risks of CHD and mortality \citep{grady1992hormone,grodstein1997postmenopausal,stampfer1991estrogen}. However, subsequent randomized trials, including one conducted as part of WHI, 
found no such benefit and raised concerns about increased risks of CHD and breast cancer \citep{anderson2004effects,herrington2000effects,hulley1998randomized,writing2002risks}. This divergence triggered a significant decline in HRT usage. Recent reanalyses of WHI data, with age stratification, along with comprehensive analyses of newer randomized trials and observational data, consistently demonstrate that when HRT is initiated shortly after menopause, there are notable reductions in CHD and mortality \citep{hodis2008postmenopausal,manson2013menopausal,salpeter2006brief,rossouw2007postmenopausal}. 
Unfortunately, these nuances in the data often went underreported in the media, leaving lingering apprehensions regarding HRT \citep{cagnacci2019controversial}.
Motivated by the debate over apparently contrasting results between the WHI's clinical trial results and previous observational studies  and by recent discoveries on the importance of timing and age at which HRT is initiated \citep{cagnacci2019controversial,lobo2017hormone}, we approach our analysis in \S\ref{sec:real_data} below from the perspective of matching and ask to what degree match-adaptive randomization inference is able to recalibrate uncertainty about the presence of a beneficial effect, in contrast to methods of inference that allow all within-pair permutations of treatment.

\section{Formal framework and role of $Z$-dependence}
\subsection{Setup for an inexactly-matched observational study}\label{subsec:setup}
Consider a sample of $N$ units independently drawn from a population. Each unit has an observed binary treatment indicator $Z$, an observed covariate vector $X$, and potential outcomes $Y(0)$ and $Y(1)$ which represent the values that would have been observed under each treatment option. Under SUTVA, we observe outcomes  $Y = ZY(1) + (1-Z)Y(0)$. 
 Let $ \mathcal{F} = (\mathbf{Y(0)}, \mathbf{Y(1)}, \mathbf{X})$ where boldface indicates the vector (or matrix) of all values in the sample.  
 {The investigator solves an optimization problem under which every subject in the treated group is paired to a distinct subject in the control group}, choosing pairs so as to minimize the average value of some predefined covariate distance between subjects \citep{rosenbaum1989optimal}; for a more formal characterization, see 
 \S\ref{app:matching} in the Supplementary materials. This produces $K \leq N/2$ disjoint pairs, indexed by $k$, each pair with two individuals $j = 1, 2$ so $Z_{k1} + Z_{k2} = 1$; {the remaining $N-2K$ individuals remain unmatched. 
We let $\mathcal{M}_{opt}$ represent the collection of $K$ index pairs for the matched units, and we let $\mathcal{U}_{opt}$ represent the collection of the indices   and treatment values for the  $N-2K$ unmatched units.} {Both $\mathcal{M}_{opt}$ and $\mathcal{U}_{opt}$ are deterministic functions} of $\mathbf{X}$ and $\mathbf{Z}$. 
{Our goal is} to conduct statistical inference for the matched sample using the distribution 
$\mathbb{P}(\mathbf{Z} \mid {\mathcal{M}_{opt}, \mathcal{U}_{opt}},  \mathcal{F})$. 

If unobserved confounding is absent so  $\mathbb{P}(Z=1 \mid X, Y(1),Y(0))$ is equal to the propensity score $ \mathbb{P}(Z = 1 \mid X) = \lambda(X) \in (0,1)$, 
and matched units share identical propensity scores (with mild conditions on choice among exact matches), $\mathbb{P}(\mathbf{Z} \mid {\mathcal{M}_{opt}, \mathcal{U}_{opt}},  \mathcal{F})$ is uniform over treatment permutation $\mathbf{Z}'$ {for which (i) $Z'_{k1} + Z'_{k2} = 1$ for all $k=1, \dots, K$ and (ii) unmatched units retain their original treatments, }  
i.e., for any such $\mathbf{Z}'$, $\mathbb{P}(\mathbf{Z}'|{\mathcal{M}_{opt}, \mathcal{U}_{opt}}, \mathcal{F})=1/2^K$. This  suggests permuting treatment labels with equal probability within matched pairs, while holding outcomes fixed, to conduct inference. We call this distribution $F_0(\mathbf{Z})$.  

In practice, it is  impossible to form matched pairs with identical propensity scores. 
Under inexact matching, the conditional distribution of treatment
$F_\lambda(\mathbf{Z})$ depends on 
propensity score differences between the two units in each matched pair.  Accordingly, \citet{pimentel2023covariate} suggested basing inference on the following distribution {$F_\lambda(\mathbf{Z})$} in place of $F_0(\mathbf{Z})$. {For any treatment permutation $\mathbf{Z}'$ such that $Z_{k1}' + Z_{k2}' = 1$ for all $k=1,\dots,K$ and all unmatched individuals' treatments are unchanged from the original $\mathbf{Z}$, we have:}  

\vspace{-3em}
\begin{align}
	\label{eqn:ppi}
	F_\lambda(\mathbf{Z})=
	 \prod_{k=1}^K\frac{\eta_{k1}^{Z_{k1}'}\eta_{k2}^{Z_{k2}'}}{\eta_{k1} +\eta_{k2}} 
	 = \prod_{k=1}^Kp_{k1}^{Z_{k1}}p_{k2}^{Z_{k2}},
\end{align}
\vspace{-3em}

\noindent where $\eta_{kj} = \lambda_{kj}/(1-\lambda_{kj})$ and $p_{kj} = \eta_{kj}/(\eta_{k1} + \eta_{k2})$. 
While the true propensity score $\lambda(\cdot)$ is typically unknown and must be estimated from observed data, these estimates can be used to approximate 
$F_\lambda$; 
\citet{pimentel2023covariate} proposed using this distribution for inference in practice, labeling this procedure covariate-adaptive randomization inference. 

Unfortunately, \citet{pimentel2023covariate}'s results ignore an important second source of Type I error violations. Permuting treatment labels using the 
distribution $F_\lambda(\mathbf{Z})$ 
essentially treats 
the matched pairs specified by $\mathcal{M}_{opt}$ as fixed across all {treatment permutation vectors $\mathbf{Z}'$ for which $Z'_{k1} +Z'_{k2} =1$ for all $k = 1, \ldots, K$ and unmatched units maintain their original treatment status}. However, in practice the selection of matched pairs depends on both $\mathbf{X}$ and $\mathbf{Z}$ and may vary across observed values of $\mathbf{Z}$.
Incorrectly including permutations that would have led to a different {choice} of $\mathcal{M}_{opt}$ can introduce bias into inference relative to the true conditional distribution for treatment after matching $\mathbb{P}(\mathbf{Z} \mid {\mathcal{M}_{opt}, \mathcal{U}_{opt}},\mathcal{F})$, particularly when propensity score differences correlate with outcome differences. 
 \cite{pimentel2023covariate} denote the discrepancy between $\mathbb{P}(\mathbf{Z} \mid {\mathcal{M}_{opt}, \mathcal{U}_{opt}},\mathcal{F})$ and $F_\lambda(\mathbf{Z})$ 
 as $Z$-dependence, in reference to $\mathcal{M}_{opt}$'s {and $\mathcal{U}_{opt}$'s} dependence on $\mathbf{Z}$, and document via simulation 
 how inference procedures based on 
  $F_\lambda(\mathbf{Z})$
  fail to control Type I error violations.
 

\subsection{$Z$-dependence and Type I error: a small example}
To illustrate how $Z$-dependence can lead to anticonservative inference in practice, we first consider a small example depicted in Figure~\ref{fig:small_example}. Four treated are units optimally matched into pairs (from a pool of six control units) using propensity score distances, yielding $\mathcal{M}_{opt}$ consisting of $\{(A,E), (B,G), (C,H), (D,I)\}$ and $\mathcal{U}_{opt}$ consisting of $\{F, J\}$ with their treatment indicators $\{0,0\}$. 
The sharp null hypothesis of no treatment effect for any individual holds, and study outcomes (identical to potential outcomes under control) are perfectly correlated with propensity scores. 
The difference-in-means statistic is 0.7. 

\begin{figure}
\vspace{-1em}
\small
		\centering
\begin{tabular}{ccc}
\small{(i) Original treatment assignment}& \small{(ii) Permute pairs A-E, B-G} & \small{(iii) Permute pairs B-G, C-H, D-I}\\[-2em]
\includegraphics[scale=0.45]{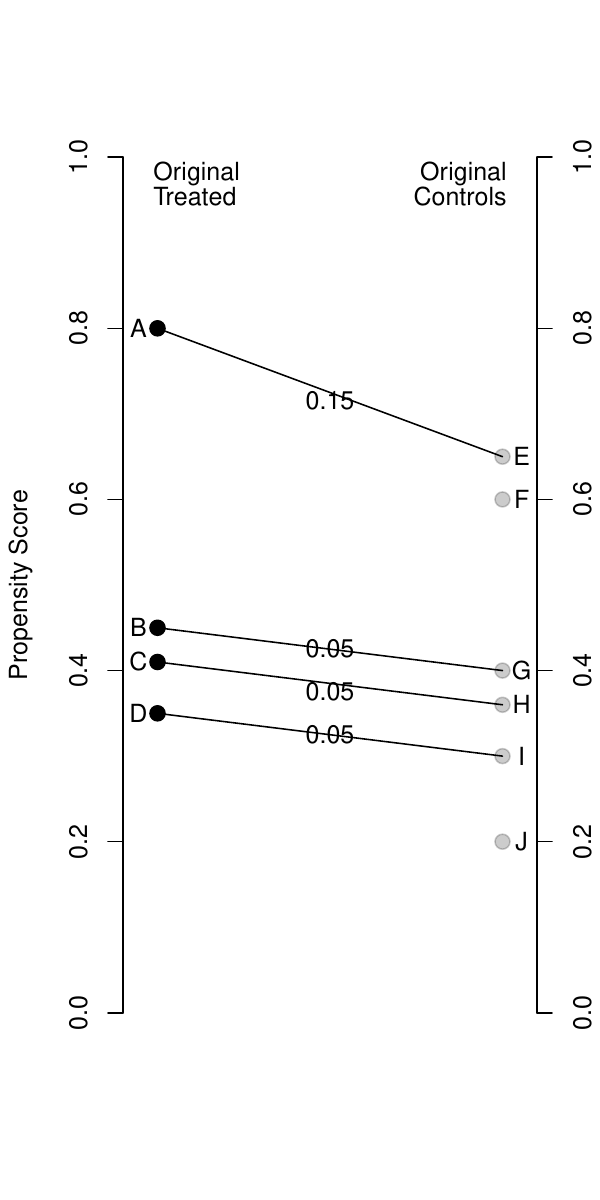}  & \includegraphics[scale=0.45]{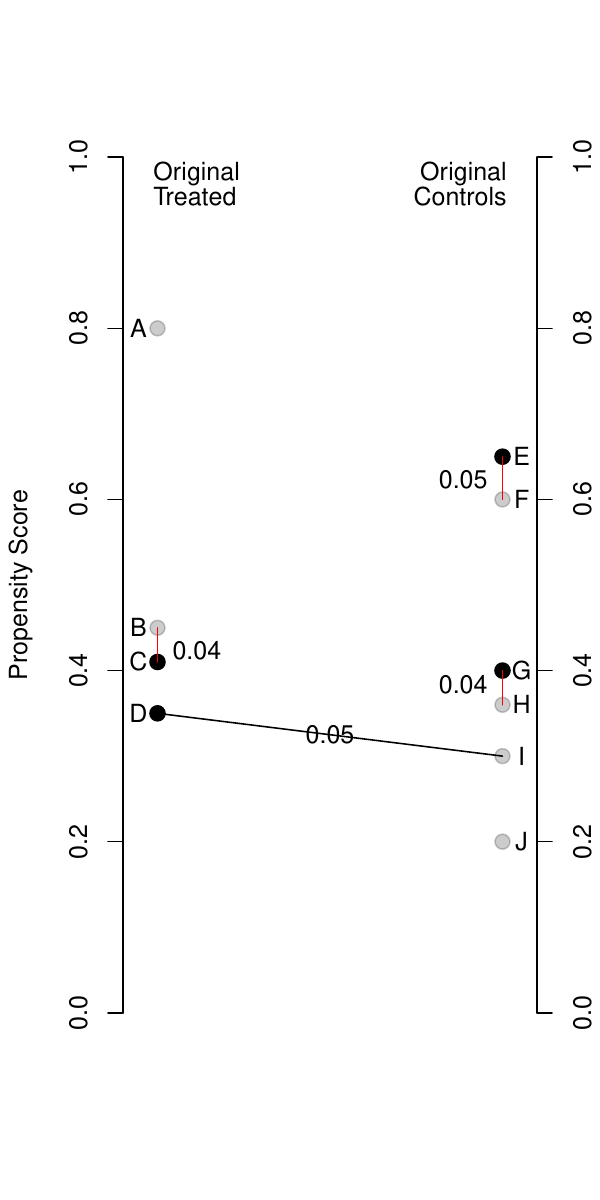} & \includegraphics[scale=0.45]{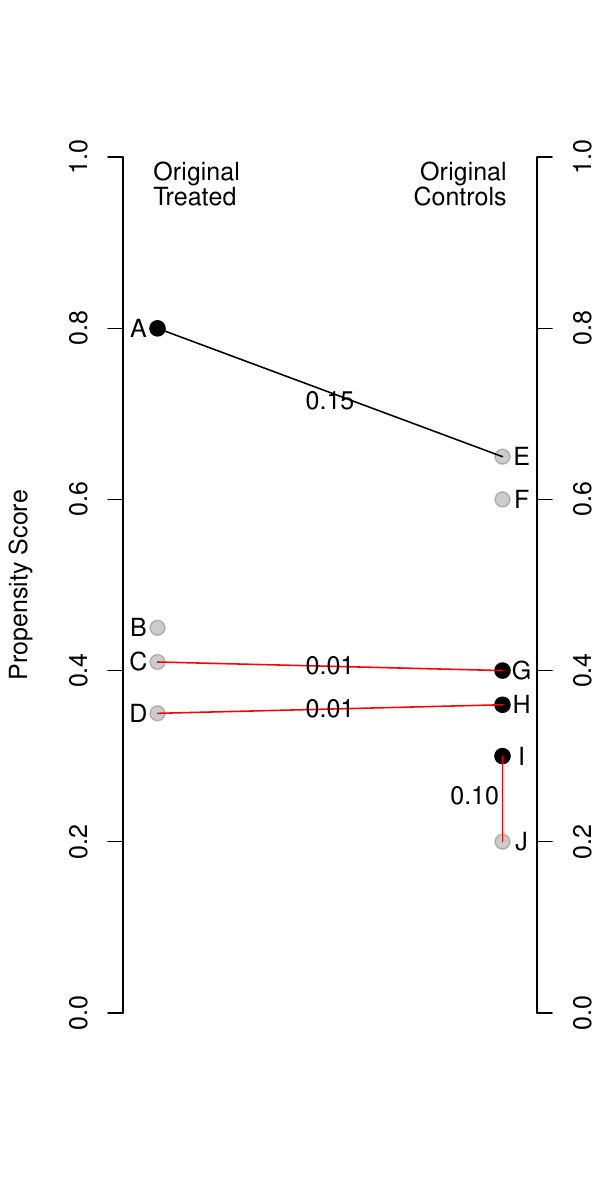}
\end{tabular}

\vspace{-3em}
\begin{tabular}{c|cccccccccc}
\hline
Unit & A & B& C& D& E &F &G &H & I & J\\\hline
Propensity Score & 0.80 & 0.45 & 0.41 & 0.35 & 0.65 & 0.60 & 0.40 & 0.36 & 0.30 & 0.20 \\ 
$Y(0)$ & 8.0 & 4.5 & 4.1 & 3.5 & 6.5 & 6.0 & 4.0 & 3.6 & 3.0 & 2.0\\\hline
\end{tabular}

\caption{\small
An optimal propensity score match and two permutations of treatment under which different matched pairs would have been formed if each had been observed initially. Black dots indicate treated units and gray dots indicate controls; black lines connect subjects in the same matched pair in the original match, and red lines show better matched pairs that could have been formed if this permutation had been observed originally. 
}
\label{fig:small_example}
\end{figure}

 With the uniform permutation test, 
a one-sided p-value of $1/2^{4} = 0.0625$ is obtained for the alternative hypothesis that the treatment effect is larger than zero. 
Next we perform a covariate-adaptive test using the same matched pairs. Since the treated units in each matched pair have higher propensity scores than their matched controls, the covariate-adaptive test recognizes that the original treatment allocation is relatively more likely than its permuted versions, and 
 the p-value increases to 0.11.  
 Finally, we account for $Z$-dependence by a brute force method, rerunning the optimal propensity score matching algorithm for each of the $2^4$ possible permutations. Since only three out of the 16 treatment permutations lead to the same set of matched pairs, we restrict attention to these permutations, renormalizing their covariate-adaptive probabilities to obtain a new distribution for treatment.  As a result, the p-value now increases further to approximately 0.41. Table \ref{tab:small_example} provides the complete distribution of the difference-in-means statistic under each treatment permutation.  While, in this particular case, none of the three tests rejects at the 0.05 level, in a slightly larger study, the first two tests might easily yield nominally significant results.

Evidently, conducting inference using the proper conditional distribution of treatment given the actual match constructed can protect against Type I error violations. 
This control comes at a cost in power -- in this example, for instance, it will be impossible ever to reject the null {at the typical 0.95 level (or even at the 0.25 level)} when accounting for $Z$-dependence since there are only three allowed permutations.  However, a test with good power is not very useful if it cannot also control Type I error  reliably.  As such, {we prioritize Type I error rate control over power considerations and focus on constructing a test that will be valid in the presence of $Z$-dependence. } 
 Unfortunately, the brute-force approach to adjusting for $Z$-dependence used for the small example scales very poorly with the size of the match, since the computational complexity of solving even one optimal match grows polynomially with the sample size, while the number of possible permutations of treatment (and hence distinct matches that need to be computed) explodes exponentially.

\begin{table}
\small
		\centering
\begin{tabular}{r|cccc|r|rrr}
	\hline
&\multicolumn{4}{c|}{Permute this pair?} & Diff.-in-Means & \multicolumn{3}{c}{Null probability} \\
&A-E & B-G & C-H & D-I & Value & Unif. & Cov.-Ad. & Match-Ad.\\
  \hline
1 &  &  &  &  & 0.70 & 0.06 & 0.12 & 0.41 \\ 
  2 & X &  &  &  & -0.05 & 0.06 & 0.05 &  \\ 
  3 &  & X &  &  & 0.50 & 0.06 & 0.09 &  \\ 
  4 & X & X &  &  & -0.25 & 0.06 & 0.04 &  \\ 
  5 &  &  & X &  & 0.50 & 0.06 & 0.09 &  \\ 
  6 & X &  & X &  & -0.25 & 0.06 & 0.04 &  \\ 
  7 &  & X & X &  & 0.30 & 0.06 & 0.08 & 0.27 \\ 
  8 & X & X & X &  & -0.45 & 0.06 & 0.04 &  \\ 
  9 &  &  &  & X & 0.45 & 0.06 & 0.09 & 0.32 \\ 
  10 & X &  &  & X & -0.30 & 0.06 & 0.04 &  \\ 
  11 &  & X &  & X & 0.25 & 0.06 & 0.08 &  \\ 
  12 & X & X &  & X & -0.50 & 0.06 & 0.03 &  \\ 
  13 &  &  & X & X & 0.25 & 0.06 & 0.07 &  \\ 
  14 & X &  & X & X & -0.50 & 0.06 & 0.03 &  \\ 
  15 &  & X & X & X & 0.05 & 0.06 & 0.06 &  \\ 
  16 & X & X & X & X & -0.70 & 0.06 & 0.03 &  \\ 
  \hline
\end{tabular}
\caption{Null distributions for the example in Figure \ref{fig:small_example} under uniform randomization inference, covariate-adaptive randomization inference, and match-adaptive randomization inference. The 16 rows delineate the $2^4=16$ possible permutations of treatments, with X's indicating which pairs were permuted.  Probabilities are given to two decimal places; in the final column, zero probabilities are omitted for greater clarity.}
\label{tab:small_example}
\end{table}

\section{Dealing with $Z$-dependence: match-adaptive randomization inference}
\label{sec:match_adaptive}
 
We now develop a computationally practical hypothesis testing framework using the distribution $\mathbb{P}(\mathbf{Z} \mid  {\mathcal{M}_{opt}, \mathcal{U}_{opt}}, \mathcal{F})$, 
labeled as match-adaptive randomization inference. This term signifies both its connection to covariate-adaptive randomization inference (which is based on the distribution  $F_\lambda(\mathbf{Z})$ 
 in the absence of exact matching) and its enhanced ability to handle the match itself as a random quantity. 
We characterize $\mathbb{P}(\mathbf{Z} \mid {\mathcal{M}_{opt}, \mathcal{U}_{opt}}, \mathcal{F})$ in three stages of increasing complexity: two matched pairs, many matched pairs without any unmatched controls, and a match with many pairs and unmatched controls.  In each stage we leverage a new structural property of the matched pair configuration to capture key information about compatibility of candidate treatment vectors with the original match.  

We assume for convenience that the matching distance is absolute difference on an estimated propensity score (although our results hold for matching on differences in any other univariate score). 
We assume there are no ties among estimated propensity scores (if some ties do occur, it is straightforward to introduce a small perturbation to obtain unique values). 
For extensions to handle exact matching  and caliper matching, see 
 Section~\ref{sec:gen}.

\subsection{Two pairs only: overlapping pairs and crossing matches}
\label{subsec:two_pairs}

First consider a toy problem with two treated units and two control units as in Figure \ref{fig:overlap_pairs}, so only two matched pairs are formed. 
In this case, only four permutations of treatment are possible.  To determine which of these permutations would have resulted in the same set of optimally matched pairs, we define a key concept, overlapping pairs. 

\begin{figure}
	\begin{tabular}{cc}(a)  \includegraphics[scale=0.65]{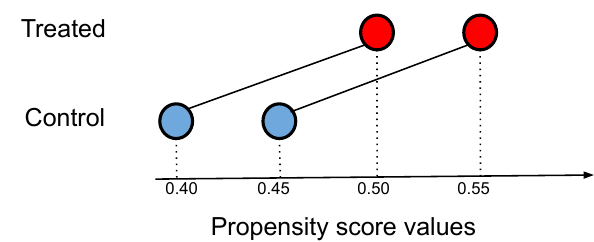} & (b) \includegraphics[scale=0.65]{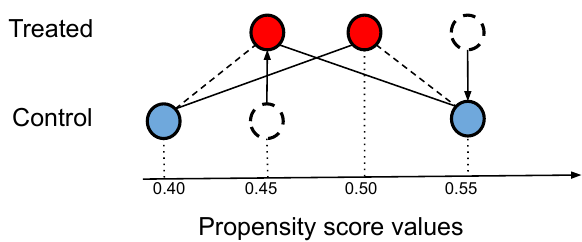} \end{tabular}
	\caption{Two overlapping pairs in (a), and a permutation of treatment within these pairs under which the fixed pairing becomes sub-optimal in (b).  The match in (b) can be improved in average propensity score difference by choosing instead the matches indicated by the dashed lines. }
	\label{fig:overlap_pairs}
\end{figure}

\begin{definition}
	\label{def:overlapping}
	Let $k_1$ and $k_2$ be indices of distinct matched pairs, and without loss of generality let 
	$\lambda(X_{k_11}) > \lambda(X_{k_12})$ and $\lambda(X_{k_21}) > \lambda(X_{k_22})$.
	Then, $(k_1,k_2)$ is an \emph{overlapping pair} if and only if
$	\lambda(X_{k_11})> \lambda(X_{k_21})> \lambda(X_{k_12}) \,\, \text{ or }\,\, \lambda(X_{k_21})>\lambda(X_{k_11})> \lambda(X_{k_22}).$
\end{definition}
\vspace{-0.5em}
 
A related concept is the crossing match introduced by \cite{saevje2021inconsistency}. For any two distinct matched pairs $k_1$ and $k_2$, suppose, without loss of generality, $Z_{k_11} = Z_{k_21} = 1$ in the original treatment vector.  Then, the pair $(k_1,k_2)$ is a crossing match if and only if
$\max\left\{\lambda(X_{k_11}), \lambda(X_{k_22})\right\} < \min\left\{\lambda(X_{k_12}), \lambda(X_{k_21})\right\}$. 
Crossing matches are suboptimal, since switching which treated and controls are matched 
leads to a strictly lower objective function in 
the optimal matching problem. Furthermore, an overlap relationship is present between two pairs if and only if they form a crossing match under some within-pair permutation of $\mathbf{Z}$.  This is why overlapping relationships matter for determining  $\mathbb{P}(\mathbf{Z} \mid {\mathcal{M}_{opt}, \mathcal{U}_{opt}}, \mathcal{F})$; all permutations in our two-pair study lead to the same optimal match if and only if the pairs do not overlap. If they do overlap, the only permutations in the support of $\mathbb{P}(\mathbf{Z} \mid{\mathcal{M}_{opt}, \mathcal{U}_{opt}},  \mathcal{F})$ are those in which either both pairs are permuted or neither pair is permuted. In Figure \ref{fig:overlap_pairs}(a), the two pairs overlap, and panel (b) illustrates how a crossing match arises when treatments in exactly one pair are permuted. 


\subsection{No unmatched units: connected components}
\label{subsec:no_unmatched}

When only two pairs are present and all units are matched, checking whether they overlap suffices to characterize  $\mathbb{P}(\mathbf{Z} \mid{\mathcal{M}_{opt}, \mathcal{U}_{opt}},  \mathcal{F})$ fully.  Now consider a more general setting with many matched units but no unmatched controls (i.e., equal numbers of treated and control subjects in the original observational study).  Overlapping pair relationships are still of central importance, and we represent them all
 a graph with nodes $\{1, \ldots, K\}$, each corresponding to a matched pair in $\mathcal{M}_{opt}$, and edges $(k_1, k_2)$ present only if $k_1\neq k_2$ and $k_1,k_2$ overlap. 
 The connected components of this graph 
  are the key structural element that determines our distribution of interest in this setting.  In brief, treatment permutations within a connected component (consisting of a group of one or more pairs linked by overlap relationships) must obey similar restrictions to a single set of overlapping pairs to be compatible with the original match. If at least one of the pairs in the connected component is permuted while at least one other is not permuted, a crossing match is necessarily created among the pairs and the original match is no longer optimal.  However, as long as unmatched units are absent, permutations that either keep or permute all treatment assignments within pairs lead to the same optimal match.   Figure \ref{fig:comps_example}(ii) shows connected components in the small example of Figure \ref{fig:small_example}.  The pairs  B-G and C-H are part of the same connected component since they overlap {(as illustrated in Figure \ref{fig:comps_example}(i))}.  
 
 \begin{figure}
 	\begin{minipage}{0.31\textwidth}
 		(i)\\ \includegraphics[scale=0.3]{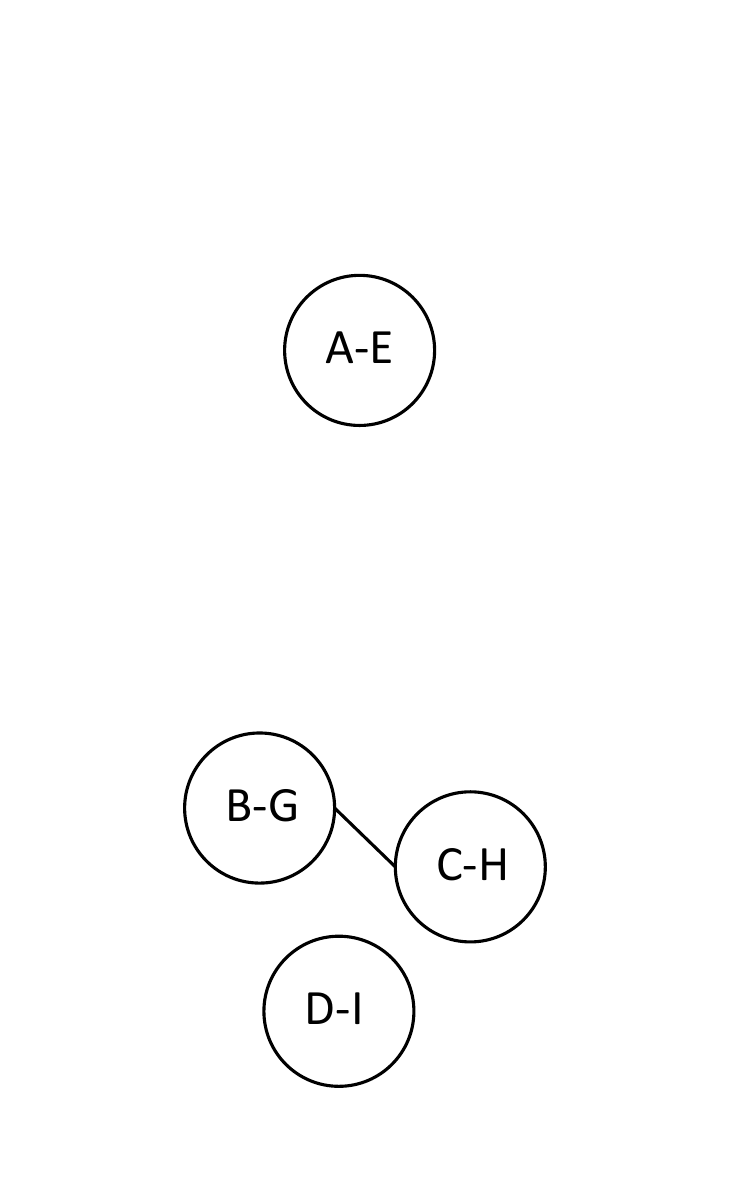}
 	\end{minipage}
 	\begin{minipage}{0.31\textwidth}
 		(ii)\\ \includegraphics[scale=0.3]{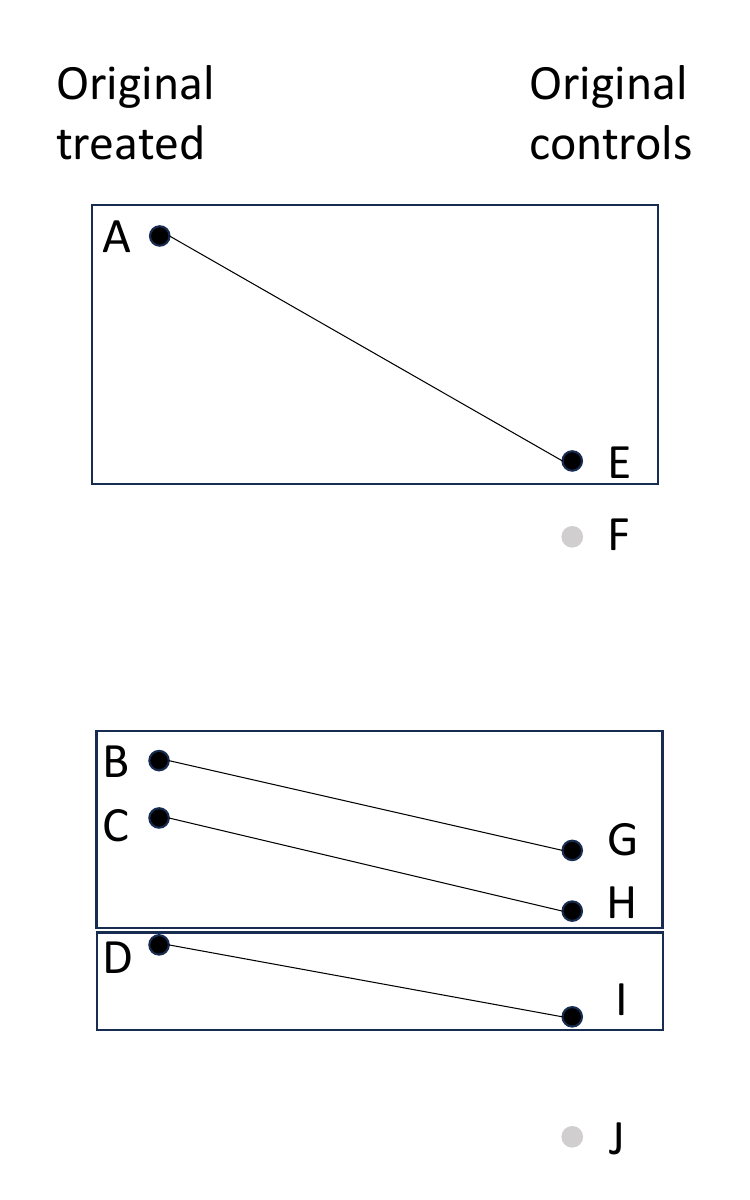}
 	\end{minipage}
 	\begin{minipage}{0.31\textwidth}
 		(iii)\\ \includegraphics[scale=0.3]{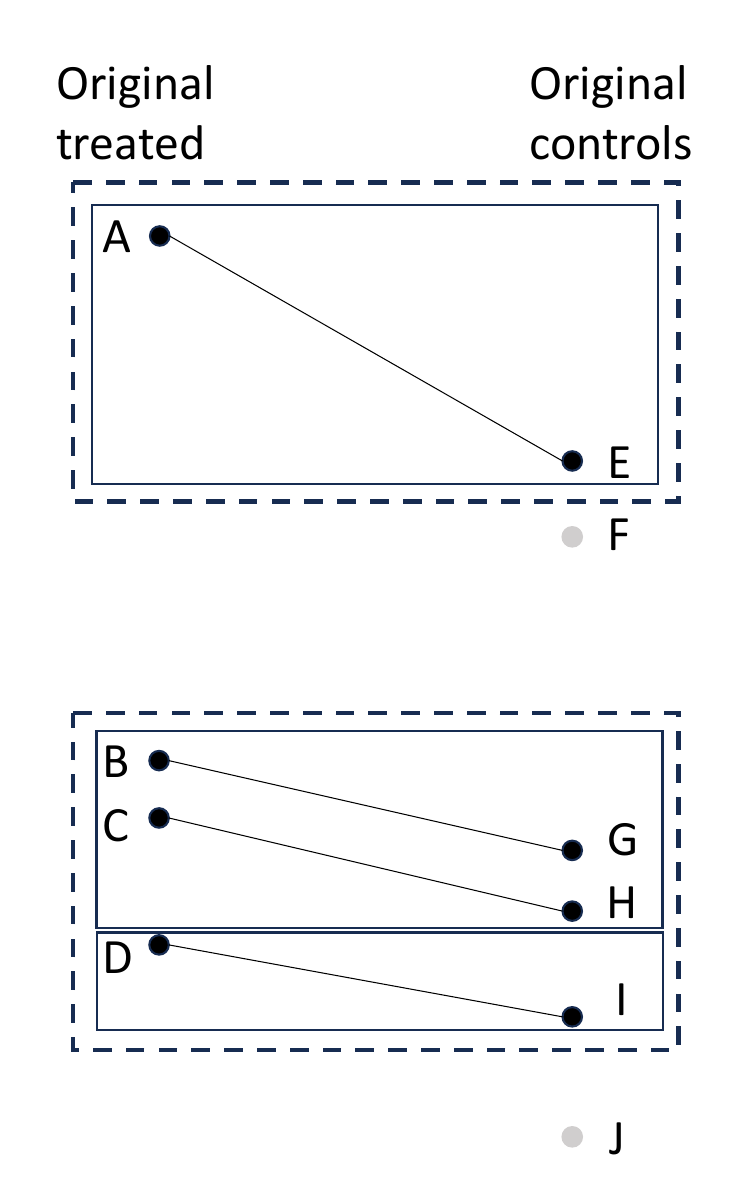}
 	\end{minipage}
 	\caption{Overlap graph for the pairs in the small example of Figure \ref{fig:small_example} (i), with further details on connected components (ii) and meta-components (iii).  In (i) each circle represents a pair (with unmatched individuals omitted), and the two pairs that overlap (B-G and C-H) are connected by an edge. In (i)-(ii), dots represent units rather than pairs.  Black dots represent matched individuals and gray dots represent unmatched controls with black lines connecting subjects grouped in the same matched pair in the match.
 		Solid rectangles each contain a distinct connected component of the overlap graph; dashed rectangles contain distinct meta-components.  In all figures, as in Figure \ref{fig:small_example}, objects are arranged vertically according to propensity score value. 
 	}
 	\label{fig:comps_example}
 \end{figure}

 Algorithm 1 provides an efficient procedure for 
  identifying the set of permutations that only permute entire connected components in lockstep, and Proposition \ref{prop:condsupport} shows that this modification suffices to identify the support of $\mathbb{P}(\mathbf{Z} \mid {\mathcal{M}_{opt}, \mathcal{U}_{opt}},\mathcal{F})$.  Subroutine \emph{ConstructOverlapGraph} is specified fully in \S\ref{app:subroutine} in the Supplementary materials. The proof of Proposition \ref{prop:condsupport} can be found in \S\ref{app:proof1} in the Supplementary materials.
%

 \textbf{Algorithm 1}
\begin{enumerate}[topsep=0.2pt, partopsep =0.2pt, itemsep=0.2pt, parsep = 0.2pt]
\item Conduct an optimal propensity score match with $K$ pairs. For each matched pair $k$, denote the treated propensity by $\lambda_k^T$ and control propensity score by $\lambda_k^C$.
\item Run the subroutine \emph{ConstructOverlapGraph} (with $(\lambda_1^T, \lambda_1^C), \ldots, (\lambda_K^T, \lambda_K^C)$ as inputs).

\item Identify the connected components of the graph and denote the indices for pairs in component $r$ by $\mathcal{S}_r, r=1,\dots, R$. 
\item Generate all permutation vectors $\mathbf{W} \in \{0,1\}^R$ where $W_r$ indicates whether the treatment assignments of all pairs in connected component $\mathcal{S}_r$ are reversed (relative to original treatment $\mathbf{Z}$).  The set of corresponding permuted treatment vectors $\mathbf{Z}'$ is returned as the support for the randomization test.
\end{enumerate}

\begin{proposition}
\label{prop:condsupport}
\textcolor{black}{Suppose that the initial match left no unmatched units.  Then for any compatible permutation $\mathbf{Z}'$ of $\mathbf{Z}$ generated by the algorithm, $\mathbb{P}({\mathcal{M}_{opt}, \mathcal{U}_{opt}}\mid \mathbf{Z}',  \mathcal{F}) =1$, and for any permutation $\mathbf{Z}'$ rejected, $\mathbb{P}( {\mathcal{M}_{opt}, \mathcal{U}_{opt}}\mid \mathbf{Z}',  \mathcal{F}) =0$. } 
\end{proposition}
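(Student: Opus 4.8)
The plan is to establish the two claims about $\mathbb{P}(\mathcal{M}_{opt}, \mathcal{U}_{opt} \mid \mathbf{Z}', \mathcal{F})$ by reducing them to a purely combinatorial statement about optimal matchings. Since there are no unmatched units, the optimal match is a minimum-cost perfect matching in the bipartite graph between the treated set and the control set induced by a given treatment vector, with edge costs equal to absolute propensity-score differences. Fixing $\mathcal{F}$ (hence all propensity scores) and a permuted vector $\mathbf{Z}'$ in the support of the algorithm's output, I need to show that $\mathcal{M}_{opt}$ (together with the vacuous $\mathcal{U}_{opt}$) is the unique optimal match for $\mathbf{Z}'$; and for a $\mathbf{Z}'$ not produced by the algorithm, that $\mathcal{M}_{opt}$ is \emph{not} optimal for $\mathbf{Z}'$, so its selection probability is $0$. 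Because there are no ties in the propensity scores, one can argue the minimum-cost perfect matching is unique (a standard consequence of generic edge weights), so ``optimal'' and ``selected with probability $1$'' coincide; I would state this genericity/uniqueness fact up front and then work only with the objective value.

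The first (``compatible'') direction: let $\mathbf{Z}'$ differ from $\mathbf{Z}$ by reversing treatment within exactly the pairs lying in some union of connected components $\bigcup_{r \in T}\mathcal{S}_r$. Reversing an entire overlap-connected component permutes which endpoint of each of its pairs is ``treated,'' but the set of pair \emph{partners} is unchanged, so $\mathcal{M}_{opt}$ is still a valid perfect matching for $\mathbf{Z}'$. It remains to show no cheaper perfect matching exists for $\mathbf{Z}'$. Suppose for contradiction some matching $\mathcal{M}'$ is strictly cheaper; then $\mathcal{M}_{opt}$ and $\mathcal{M}'$ differ, and the symmetric difference of two perfect matchings decomposes into alternating cycles. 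On any such cycle one obtains, by the exchange/crossing argument of Definition~\ref{def:overlapping} and the crossing-match fact quoted from \cite{saevje2021inconsistency}, a pair of $\mathcal{M}_{opt}$-edges that form a crossing match — which would improve $\mathcal{M}_{opt}$ even under the \emph{original} $\mathbf{Z}$ after translating back through the component reversal, contradicting optimality of $\mathcal{M}_{opt}$ for $\mathbf{Z}$. Here the key structural input is the characterization already asserted in the text: two pairs admit a crossing match under \emph{some} within-pair permutation iff they overlap, and a within-component reversal never ``activates'' a crossing because it moves the relevant endpoints in lockstep. I would phrase this by showing that any improving alternating cycle must involve two pairs in different components (else the reversal was not in lockstep) or two non-overlapping pairs, and in either case produce a contradiction with optimality of $\mathcal{M}_{opt}$ under $\mathbf{Z}$ (for the cross-component case, by restricting the improving exchange to the two pairs involved and noting overlap is necessary for improvement).

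The second (``rejected'') direction is the contrapositive: if $\mathbf{Z}'$ is a within-pair permutation not of the lockstep form, then some connected component $\mathcal{S}_r$ has at least one reversed pair $k_1$ and at least one non-reversed pair $k_2$; by connectivity one may take $k_1, k_2$ adjacent in the overlap graph, i.e., overlapping. Then reversing $k_1$ but not $k_2$ turns the overlapping pair into a crossing match under $\mathbf{Z}'$ (this is exactly the ``iff'' content of the overlap definition combined with the crossing-match inequality), so swapping partners strictly decreases the objective; hence $\mathcal{M}_{opt}$ is suboptimal for $\mathbf{Z}'$ and is selected with probability $0$. The main obstacle is the first direction — specifically, ruling out \emph{every} cheaper perfect matching, not just local two-pair swaps; the alternating-cycle decomposition is the right tool, but care is needed to translate an arbitrary improving cycle back through the component reversal into a genuine improvement of $\mathcal{M}_{opt}$ under $\mathbf{Z}$, which is where the precise ``overlap iff crossing under some permutation'' equivalence and the lockstep structure of the reversal must be combined. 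I would handle the bookkeeping by arguing on the smallest improving alternating cycle and showing it reduces to a two-pair crossing.
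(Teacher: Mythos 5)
Your ``rejected'' direction is essentially correct and is, in substance, the paper's own argument for its case~3: by connectivity of the overlap graph there is an edge joining a flipped pair to an unflipped pair, and an overlapping pair whose orientations are misaligned is a crossing match, so the original pairing is no longer optimal under $\mathbf{Z}'$. The problem is in your ``compatible'' direction, where the central step is asserted rather than proved: from the existence of a strictly cheaper perfect matching under $\mathbf{Z}'$ you claim that some alternating cycle of the symmetric difference yields two $\mathcal{M}_{opt}$-pairs forming a crossing match. Neither cited fact gives this. Lemma~\ref{lem:nocross} is the implication ``optimal $\Rightarrow$ no crossing,'' and the ``overlap iff crossing under some within-pair permutation'' equivalence says nothing about long alternating cycles that re-pair many units at once. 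What you actually need is the converse statement in the equal-numbers case --- a pair matching containing no crossing match is already globally optimal (equivalently, any suboptimal matching admits an improving two-pair swap) --- and that is a one-dimensional fact requiring its own proof (for instance, a counting argument: at any score value $x$ where the number of matched intervals covering $x$ exceeds the absolute difference between the treated and control counts to the left of $x$, there must be one straddling pair with its treated unit on the left and one with its control on the left, and those two pairs cross). The paper never needs this converse: it uses Lemma~\ref{lem:partition} to force any optimal match under any within-pair permutation to pair units only within the connected-component intervals (each of which has equal treated and control counts), and then observes that an all-or-nothing flip within a component leaves the within-component matching problem invariant under exchanging the treated and control labels, so the original within-component pairing remains optimal. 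Your route can be completed, but as written this is a genuine gap; you should also make explicit the translation-back case split you only gesture at (two crossing pairs with the same flip status cross under the original $\mathbf{Z}$ as well, contradicting Lemma~\ref{lem:nocross}; two with different flip status must overlap, hence lie in the same component, contradicting the lockstep reversal).

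A second, smaller error: your preliminary claim that distinct propensity scores make the minimum-cost perfect matching unique ``by genericity'' is false for absolute-difference costs. If, say, two treated units both lie below two controls, both pairings have identical total cost, and no perturbation of distinct scores removes the tie; such ties are structural, not degenerate, and they can occur inside a connected component. Uniqueness is also not what the proposition requires: the paper treats $\mathcal{M}_{opt}$ and $\mathcal{U}_{opt}$ as deterministic functions of $(\mathbf{X},\mathbf{Z})$ and its proof reduces the probability-one and probability-zero claims to whether the original pairing is, or is not, an optimal solution under $\mathbf{Z}'$; you should frame your argument the same way rather than lean on a uniqueness claim that fails.
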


Algorithm 1 runs very efficiently. \emph{ConstructOverlapGraph} runs linearly in the number of pairs. Finding the connected components of a graph with $V$ nodes and $E$ edges runs linearly in $E + V$ \citep{hopcroft1973algorithm}; since in the worst case the number of overlapping relationships scales as $K^2$ with the number of pairs $K$, this is $O(K^2)$.  Both tasks tend to be less computationally expensive than constructing the optimal match itself.

The efficient method described in Algorithm 1 provides a way of choosing the support for a randomization test (i.e., which permutations will be allowed) but does not describe probabilities for the valid permutations.  These can be calculated by re-normalizing the probabilities from the covariate-adaptive randomization test of \citet{pimentel2023covariate} to account for the altered support.  Letting $\mathcal{S}_r$ represent the indices for pairs in component $r$ as in Algorithm 1, 
we conduct the test using the following probabilities for the $W_r$s: 

\vspace{-3em}

\begin{align}
\mathbb{P}(W_r = 1)  &= \frac{\theta_r^{switch}}{\theta_r^{switch} + \theta_r^{keep}}, \quad  \quad
\theta_r^{keep} = \prod_{k \in \mathcal{S}_r}p_{k1}^{Z_{k1}}p_{k2}^{Z_{k2}}, \quad 
\quad
\theta_r^{switch} = \prod_{k \in \mathcal{S}_r}p_{k1}^{1-Z_{k1}}p_{k2}^{1-Z_{k2}}. 
\label{eqn:w_prob}
\end{align}

\vspace{-1em}
\noindent Here, $p_{k1}$ and $p_{k2}$ are defined 
as in equation (\ref{eqn:ppi}).

\subsection{General case with unmatched controls: meta-components}

In practice, we usually have a larger pool of potential controls than the treated units, so that optimal pair matching leaves some control subjects unmatched.  The presence of these unmatched controls leads to complications when determining the distribution $\mathbb{P}(\mathbf{Z} \mid {\mathcal{M}_{opt}, \mathcal{U}_{opt}}, \mathcal{F})$. Although not useful for matching under the original treatment $\mathbf{Z}$, these controls might be included in the optimal matched control group under some permutations of treatment.  For example, compare panels (a) and (b) of Figure \ref{fig:small_example}; although the matched pair A-E does not overlap with any other pair, permuting it permits E and F to match to each other instead with lower cost.   Thus, Algorithm 1, which examines reconfiguration only among subjects included in the original match, is not enough.  

In fact, the impact of unmatched controls can lead to subtler problems that extend beyond a single matched pair or connected component.  Consider the matched pairs C-H and D-I in Figure \ref{fig:small_example}.  These pairs do not overlap so Algorithm 1 places no restrictions on how they can be permuted.  If either one of these pairs is permuted while the other remains unchanged, the original match remains optimal.  However, as panel (c) of Figure \ref{fig:small_example} shows, it is not permissible to permute both C-H (along with pair B-G in the same connected component) and D-I, since this opens the possibility of a better match involving the previously unmatched control J.  This shows that when unmatched controls are present, it is no longer sufficient to consider connected components in isolation, as Algorithm 1 did.  The presence of an unmatched control can create entanglement among the allowed permutations of several neighboring connected components.

To address these challenges, we leverage a final structural property of a set of matched pairs, the meta-component.  Call two connected components $r, r'$ adjacent if there are no unmatched controls or other connected components between them in propensity score space.  Formally,  
adjacency holds if for any unit $k$ in $\mathcal{S}_r$ and any unit $k'$ in $\mathcal{S}_{r'}$, each unit $k''$ with propensity score $\lambda_{k''}$ between $\lambda_{k}$ and $\lambda_{k'}$ lies in either $\mathcal{S}_r$ or $\mathcal{S}_{r'}$. 
For any connected component, its meta-component is the largest group of consecutive adjacent connected components to which it belongs. By default, connected components not adjacent to any other components are also considered to be meta-components of size 1.  Figure \ref{fig:comps_example}(iii) shows the meta-components in the match from the small example of Figure \ref{fig:small_example}.

Meta-components are useful because they define how far across the matched design as 
we have to look to consider the possible impact of including a single new unmatched control.  As will be shown, if under a treatment permutation none of the unmatched controls that lie closest to the edge of any meta-component can be used to improve the match among units in that meta-component, the current match remains optimal.
Accordingly, Algorithm 2 leverages meta-components to identify the correct support for match-adaptive randomization inference in the presence of unmatched controls.
It treats each meta-component separately, recursively exploring all possible permutations of treatment for the internal components of each and screening out permutations under which unmatched controls can enter the match and improve the objective.  All subroutines are specified 
in \S\ref{app:subroutine} in the Supplementary materials. 

\textbf{Algorithm 2}
\begin{enumerate}
\item Build overlap graph and its connected components (steps 1-3 of Algorithm 1).
\item Run subroutine \emph{ConstructMetaComponents} (with inputs $\lambda_1, \ldots, \lambda_N$ and $\mathcal{S}_1, \ldots, \mathcal{S}_R$) and store output as $\mathcal{O}$. Denote the number of meta-components as $I$.
\item (Preprocess treatment assignments for each meta-component) For each $i =1, \ldots, I$:
\begin{enumerate}
%
%
%
\item Run subroutine  \emph{CheckComponentUp}$(\mathcal{O},1,0,0,s_i^\ell, \mathbf{w})$  for $\mathbf{w}=(0,...,0)$ and $\mathbf{w}=(1,0,\ldots,0)$.  Label the resulting set of compatible vectors $\mathcal{W}_{i,up}$.  
%
\item Run \emph{CheckComponentDown}$(\mathcal{O}, |\mathcal{T}_i|,0,0,s_i^u, \mathbf{w})$ for $\mathbf{w}=(0,...,0)$ and $\mathbf{w}=(1,0,\ldots,0)$.  Label the resulting set of compatible vectors, $\mathcal{W}_{i,down}$. 
\item Define $\mathcal{W}_{i} = \mathcal{W}_{i,up} \cap \mathcal{W}_{i,down}$.  
\end{enumerate}

\item Draw $\mathbf{W}$ by sampling $\mathbf{w_i}\in\mathcal{W}_{i}$ for meta-component $i$. 

\item Choose treatment permutation $\mathbf{Z}'$ by setting all subjects in each component to the corresponding treatment assignment in $\mathbf{W}$.
\end{enumerate} 

Validity of Algorithm 2 is stated in Proposition~\ref{prop:unmatched} (see \S\ref{app:proof2} in the supplementary materials for the proof).
\begin{proposition}\label{prop:unmatched}
	For any permutation $\mathbf{Z}'$ of $\mathbf{Z}$ rejected by Algorithm 2, $\mathbb{P}({\mathcal{M}_{opt}, \mathcal{U}_{opt}} \mid \mathcal{F}, \mathbf{Z}') = 0$, and for any permutation $\mathbf{Z}'$ not rejected, $\mathbb{P} ({\mathcal{M}_{opt}, \mathcal{U}_{opt}}\mid \mathcal{F}, \mathbf{Z}') =1$.
\end{proposition}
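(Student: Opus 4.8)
\noindent\textit{Proof plan.} The plan is to reduce ``$\mathbf{Z}'$ is compatible with the observed match'' to a short list of \emph{local} optimality conditions and then to check that Algorithm 2 screens out exactly the permutations that violate one of them. Since we assume no ties among estimated propensity scores, for any within-pair permutation $\mathbf{Z}'$ that leaves every unmatched unit a control the optimal pair matching under $\mathbf{Z}'$ is unique, so $\mathbb{P}(\mathcal{M}_{opt},\mathcal{U}_{opt}\mid\mathcal{F},\mathbf{Z}')$ equals $1$ if $(\mathcal{M}_{opt},\mathcal{U}_{opt})$ is that unique optimum and $0$ otherwise; for every other $\mathbf{Z}'$ the probability is $0$, and such $\mathbf{Z}'$ are never returned (Algorithm 2 only ever reverses whole components and never moves an unmatched unit). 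It therefore suffices to show that a within-pair permutation $\mathbf{Z}'$ is returned by Algorithm 2 if and only if $\mathcal{M}_{opt}$ is optimal under $\mathbf{Z}'$.

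The first ingredient is a characterization of optimality for pair matching on a univariate score under $L_1$ distance: $(\mathcal{M},\mathcal{U})$ is optimal if and only if (i) no two matched pairs form a crossing match, and (ii) no single exchange of a matched control for a member of $\mathcal{U}$ strictly reduces the objective. The ``only if'' direction is immediate. For ``if'' I would run an uncrossing/exchange argument along the line: starting from an arbitrary matching $\mathcal{M}'$, repeatedly uncross any pair of pairs that cross relative to $\mathcal{M}$ and bring unmatched controls in or out one at a time, showing each move is non-increasing in cost and that the process terminates at $\mathcal{M}$. This is the step I expect to require the most care, because for general (non-univariate) assignment problems $2$-exchange local optimality does not imply global optimality, so the argument must genuinely exploit the linear order of the scores (in particular, that any negative residual cycle on the line can be decomposed into these two move types).

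Given this characterization, condition (i) is handled exactly as in Proposition~\ref{prop:condsupport}: by the equivalence between the overlap relation and the existence of a within-pair permutation that produces a crossing, a within-pair permutation satisfies (i) if and only if it reverses each connected component of the overlap graph in lockstep, i.e.\ it is encoded by some $\mathbf{W}\in\{0,1\}^R$ — precisely the class of permutations Algorithm 2 considers after Step 1. For condition (ii) I would show that it \emph{localizes to meta-components}. No unmatched control can lie inside the score interval spanned by a pair's two units (otherwise (ii) already fails for the original optimal match, and note this interval is invariant under within-pair permutation); within a connected component the union of its pairs' intervals is a single interval (follows from overlap along a spanning tree), so no unmatched control lies in a component's span; and by the definition of adjacency no unit lies strictly between adjacent components. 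Hence every unmatched control lies outside the span of every meta-component, the only candidates for a cost-reducing exchange into meta-component $i$ are the unmatched controls immediately below and above it, and each such exchange touches only pairs of meta-component $i$. Among all unmatched controls below (resp.\ above) $i$ the closest one dominates the rest as an exchange partner for every pair inside $i$, so a single control per side suffices — the scores $s_i^\ell,s_i^u$ passed to the subroutines. This both reduces checking (ii) to a per-meta-component test depending only on $\mathbf{w}_i$ and justifies the independent sampling of $\mathbf{w}_i\in\mathcal{W}_i$ in Step 4.

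Finally I would verify that \emph{CheckComponentUp} and \emph{CheckComponentDown} implement this test: traversing the connected components of meta-component $i$ in increasing (resp.\ decreasing) score order, each maintains the partial reversal vector $\mathbf{w}$, tracks which unit of each pair plays the treated versus control role under the current $\mathbf{w}$, and at each component checks whether introducing the nearest external unmatched control now yields an improving exchange, pruning exactly when it does. Correctness of the pruning rests on a monotonicity observation — moving inward from the meta-component edge, the configurations admitting an improving exchange are upward closed — so it is enough to catch the first component at which an improvement appears; precisely stating and proving this monotonicity is the second delicate point. Combining the pieces, $\mathcal{W}_i=\mathcal{W}_{i,up}\cap\mathcal{W}_{i,down}$ is exactly the set of component-reversals of meta-component $i$ satisfying (ii), and together with (i) this makes $\prod_i\mathcal{W}_i$ (the set Algorithm 2 returns) coincide with the set of within-pair permutations under which $\mathcal{M}_{opt}$ is optimal, i.e.\ with the support of $\mathbb{P}(\mathbf{Z}\mid\mathcal{M}_{opt},\mathcal{U}_{opt},\mathcal{F})$. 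Hence rejected permutations have probability $0$ and retained permutations have probability $1$, as claimed.
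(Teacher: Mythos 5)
Your high-level plan follows the same route as the paper's proof: restrict attention to lockstep reversals of connected components (Proposition \ref{prop:condsupport}), reduce improvements involving unmatched controls to a single entering control, localize that check to meta-components with only the nearest external unmatched control on each side, and then argue that \emph{CheckComponentUp}/\emph{CheckComponentDown} certify exactly the reversal vectors admitting no such improvement. However, the two steps you yourself flag as delicate are precisely where the real work lies, and your sketches do not close them. First, your condition (ii) only works if it means that the matched-control set changes by exactly one unit \emph{with arbitrary re-pairing of the retained units}, not a literal one-pair swap: in the paper's own example (Figure \ref{fig:small_example}(iii)), after permuting B-G, C-H and D-I there is no crossing match and no single-pair swap with J improves the objective, yet re-pairing three pairs with J entering strictly improves it. The correct statement is Lemma \ref{lem:onesuff}, proved by decomposing the symmetric difference of two matchings into alternating treated--control chains; your ``uncrossing/exchange'' sketch would have to be developed into exactly that argument. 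Also, your preliminary claim that absence of score ties makes the optimal match unique is false (a treated unit equidistant from two controls already yields two optima), so compatibility must be phrased as ``the original match remains optimal,'' as the paper does.

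Second, and more seriously, correctness of the recursive pruning does not follow from the monotonicity principle you invoke. The subroutines track the value of a specific alternating chain (new-match cost minus old-match cost) entering from the nearest external unmatched control, reconfiguring each traversed component in a particular way. Rejecting when $C' < D'$ is sound, since a negative-value partial chain already exhibits a feasible better match; but the acceptance direction requires showing that the chain the algorithm follows is value-minimal among all chains compatible with the given $\mathbf{w}$ --- including chains that skip components, chains whose within-component reconfiguration depends on whether treated units lie above or below their matched controls, and chains entering from more distant controls or threading several meta-components. The paper handles this with a full induction over calls to \emph{CheckComponentUp} (the hypotheses $A_j(\mathbf{w})$), using the sorted-order optimality of within-interval matchings (Lemma \ref{lem:optimal}) to compute chain-value increments and a chain-shortening argument to rule out multi-meta-component improvements. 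Your assertion that ``configurations admitting an improving exchange are upward closed'' is neither precisely stated nor evidently true, and nothing in the proposal substitutes for this induction; as written you establish only the easy rejection direction and merely assert the acceptance direction.
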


Once the conditional distribution of treatment within each meta-component has thus been characterized, we draw permutations for each meta-component independently of others, with re-normalized covariate-adaptive probabilities derived from the probabilities for individual connected components given in equation (\ref{eqn:w_prob}).  For any set of assignments $\mathbf{w}_i$ over the connected components in meta-component $i$ not rejected by Algorithm 2 (i.e., $\mathbf{w}_i \in \mathcal{W}_i$):
\begin{equation}
\mathbb{P}_{norm}(\mathbf{w}_i) = 
\frac{ \mathbb{P}(\mathbf{W}_i = \mathbf{w}_i)}{\sum_{\mathbf{w}_i' \in \mathcal{W}_i}\mathbb{P}(\mathbf{W}_i = \mathbf{w}'_i)}\mathbf{1}\left\{\mathbf{w}_i \in \mathcal{W}_i \right\}. 
\label{eqn:alg2probs}
\end{equation}
Since permutations  are independent across meta-components, these probabilities can be multiplied to derive sampling probabilities for full treatment vectors.

\subsection{Implications for hypothesis testing and confidence intervals}

So far we can accurately sample from the conditional distribution of treatment $\mathbb{P}(\mathbf{Z} \mid {\mathcal{M}_{opt}, \mathcal{U}_{opt}},  \mathcal{F})$ via appropriately constrained permutations, at least when propensity scores are known.  This means that match-adaptive randomization tests based on our constrained permutation approach with known propensity scores will control Type I error. To make this claim formal, we introduce some machinery for hypothesis testing.  Let $T:(\mathbf{Z},\mathbf{Y}) \longrightarrow \mathbb{R}$ be an arbitrary test statistic and let $\alpha \in (0,1)$ be the desired level of Type I error control.  Let $\mathbf{Z}'$ be an arbitrary draw of treatment assignments based on Algorithm 2 with probabilities given by equations (\ref{eqn:w_prob})-(\ref{eqn:alg2probs}), using the true propensity scores, and let $t(\alpha) = \min\{t: P[T(\mathbf{Z}',\mathbf{Y}) > t \mid  {\mathcal{M}_{opt}, \mathcal{U}_{opt}}, \mathcal{F}] \leq \alpha\}$.  
Using the quantile $t(\alpha)$ of our match-adaptive randomization distribution as a critical value for a one-sided test of a sharp null hypothesis of zero effect controls Type I error at level $\alpha$. The formal results are summarized in Proposition~\ref{prop:typeI}; see \S\ref{app:proof3} in the Supplementary materials for the proof. 
 
\begin{proposition}\label{prop:typeI}
Suppose the sharp null hypothesis $Y_{ki}(1)= Y_{ki}(0)$ for all $k,i$ holds.  Then 

$P[T(\mathbf{Z},\mathbf{Y}) > t(\alpha) \mid{\mathcal{M}_{opt}, \mathcal{U}_{opt}}, \mathcal{F}] \leq \alpha$ 
for all $\alpha \in (0,1)$.
\end{proposition}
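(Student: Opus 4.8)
My plan is to run the classical randomization-test argument, with the one genuinely new ingredient being the identification of the law that Algorithm 2 samples from with the true conditional law of treatment given the realized match. I would proceed in three steps. First, reduce everything to a statement about $\mathbf{Z}$: under the sharp null $Y_{ki}(1) = Y_{ki}(0)$ the observed outcomes satisfy $\mathbf{Y} = \mathbf{Y}(0)$, so $\mathbf{Y}$ is a deterministic function of $\mathcal{F}$, and $(\mathcal{M}_{opt}, \mathcal{U}_{opt})$ is a deterministic function of $(\mathbf{X}, \mathbf{Z})$ and hence of $(\mathcal{F}, \mathbf{Z})$. Consequently, after conditioning on $(\mathcal{M}_{opt}, \mathcal{U}_{opt}, \mathcal{F})$, both $T(\mathbf{Z}, \mathbf{Y})$ and the critical value $t(\alpha)$ depend on the randomness only through $\mathbf{Z}$; in particular $t(\alpha)$ is measurable with respect to the conditioning $\sigma$-field and may be treated as a constant inside $\mathbb{P}[\,\cdot \mid \mathcal{M}_{opt}, \mathcal{U}_{opt}, \mathcal{F}]$.

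Second, I would show that the law of the draw $\mathbf{Z}'$ produced by Algorithm 2 with the probabilities in (\ref{eqn:w_prob})--(\ref{eqn:alg2probs}), using the true propensity scores, is exactly $\mathbb{P}(\mathbf{Z} \mid \mathcal{M}_{opt}, \mathcal{U}_{opt}, \mathcal{F})$. Because the match is a deterministic function of $(\mathbf{X}, \mathbf{Z})$, Bayes' rule gives $\mathbb{P}(\mathbf{Z}' \mid \mathcal{M}_{opt}, \mathcal{U}_{opt}, \mathcal{F}) \propto \mathbb{P}(\mathbf{Z}' \mid \mathcal{F})\,\mathbb{P}(\mathcal{M}_{opt}, \mathcal{U}_{opt} \mid \mathbf{Z}', \mathcal{F})$, and the last factor equals $\mathbf{1}\{\mathbf{Z}' \text{ reproduces the observed match}\}$, which by Propositions \ref{prop:condsupport} and \ref{prop:unmatched} is the indicator of the support returned by Algorithm 2. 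So the target law is $\mathbb{P}(\cdot \mid \mathcal{F})$ restricted to that support and renormalized. Invoking absence of unobserved confounding and independence of treatments, $\mathbb{P}(\mathbf{Z}' \mid \mathcal{F}) = \prod_i \lambda(X_i)^{Z_i'}(1-\lambda(X_i))^{1-Z_i'}$; restricting to vectors with exactly one treated unit per matched pair and unmatched units frozen at their observed values and renormalizing collapses this product to $F_\lambda$ of (\ref{eqn:ppi}), and restricting further to lockstep permutations of whole connected components recovers the per-component Bernoulli probabilities (\ref{eqn:w_prob}). Finally, since (by the meta-component construction underlying Proposition \ref{prop:unmatched}) the admissible support is the Cartesian product over meta-components of the per-meta-component admissible sets $\mathcal{W}_i$, and $F_\lambda$ factors across meta-components, the renormalization decouples meta-component by meta-component, which is precisely (\ref{eqn:alg2probs}). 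Hence $\mathbf{Z}'$ and the observed $\mathbf{Z}$ have the same conditional distribution given $(\mathcal{M}_{opt}, \mathcal{U}_{opt}, \mathcal{F})$.

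Third, I would conclude with the standard argument. By Step 2, conditionally on $(\mathcal{M}_{opt}, \mathcal{U}_{opt}, \mathcal{F})$, $T(\mathbf{Z}, \mathbf{Y})$ and $T(\mathbf{Z}', \mathbf{Y})$ have the same distribution (using from Step 1 that $\mathbf{Y}$ is fixed given $\mathcal{F}$). Because $\mathbf{Z}'$ takes only finitely many values, the map $t \mapsto \mathbb{P}[T(\mathbf{Z}', \mathbf{Y}) > t \mid \mathcal{M}_{opt}, \mathcal{U}_{opt}, \mathcal{F}]$ is nonincreasing and right-continuous, so the minimizing $t(\alpha)$ is attained and $\mathbb{P}[T(\mathbf{Z}', \mathbf{Y}) > t(\alpha) \mid \mathcal{M}_{opt}, \mathcal{U}_{opt}, \mathcal{F}] \leq \alpha$. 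Therefore $\mathbb{P}[T(\mathbf{Z}, \mathbf{Y}) > t(\alpha) \mid \mathcal{M}_{opt}, \mathcal{U}_{opt}, \mathcal{F}] = \mathbb{P}[T(\mathbf{Z}', \mathbf{Y}) > t(\alpha) \mid \mathcal{M}_{opt}, \mathcal{U}_{opt}, \mathcal{F}] \leq \alpha$, which is the claimed bound (and, taking expectations via the tower property, also yields unconditional level-$\alpha$ control).

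I expect the main obstacle to be Step 2, and within it the bookkeeping that (i) the admissible-permutation support is a genuine Cartesian product over meta-components and (ii) $F_\lambda$ restricted to and renormalized over that product set decomposes exactly into the probabilities (\ref{eqn:w_prob})--(\ref{eqn:alg2probs}); this is where Propositions \ref{prop:condsupport}--\ref{prop:unmatched} and the meta-component machinery do the real work. The remaining pieces --- the reduction under the sharp null, the measurability of $t(\alpha)$, and the quantile argument --- are routine, though the measurability point deserves an explicit sentence, since it is what licenses treating $t(\alpha)$ as a constant when conditioning.
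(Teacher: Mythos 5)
Your proposal is correct and follows essentially the same route as the paper's proof: reduce to showing $\mathbf{Z}$ and $\mathbf{Z}'$ are identically distributed given $(\mathcal{M}_{opt},\mathcal{U}_{opt},\mathcal{F})$, apply Bayes' rule together with the fact that the match is a deterministic function of $(\mathbf{X},\mathbf{Z})$ so the conditioning reduces to restricting $\mathbb{P}(\mathbf{Z}\mid\mathcal{F})$ to the support identified by Propositions \ref{prop:condsupport}--\ref{prop:unmatched} and renormalizing, and then match that renormalized law to the probabilities in (\ref{eqn:w_prob})--(\ref{eqn:alg2probs}). The only difference is one of explicitness: you spell out the collapse of the product-Bernoulli law to $F_\lambda$ and its factorization across meta-components (which the paper defers to \citet{pimentel2023covariate} and the construction of the permutation probabilities) as well as the final quantile/measurability step, which the paper treats as standard.
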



We may also invert the match-adaptive randomization test to provide valid $1-\alpha$ confidence intervals.  Consider the null hypothesis of a constant additive effect of size $\tau$, 
$Y_{ki}(1) - Y_{ki}(0) = \tau.$
We can test this null hypothesis instead of the $\tau=0$ considered in Proposition \ref{prop:typeI} by replacing $\mathbf{Y}$  with $\mathbf{Y}-\tau \mathbf{Z}$ in the hypothesis testing framework described above.  A $1-\alpha$ confidence set is then obtained by choosing all values $\tau$ for which the test does not reject at level $\alpha$; as long as true propensity scores are used, the $1-\alpha$ coverage follows as a consequence of Proposition  \ref{prop:typeI}.  For more details on constructing confidence intervals 
see \citet{luo2021leveraging}. 


\section{Practical Considerations}\label{sec:gen}

\subsection{Speed-ups for large meta-components}
\label{subsec:speedup}
Algorithm 2 calculates the treatment distribution by examining all possible treatment assignments in each meta-component.  For large meta-components it can become inefficient to explore all possible treatments.  Large datasets need not contain large meta-components 
since many pairs may be subsumed either into large connected components or into a large number of small meta-components.  However, in practice we recommend checking for very large meta-components before running Algorithm 2. 

Some changes to the algorithm can help optimize performance for large meta-components. If one is approximating the overall distribution by some finite number of draws $N_{sim}$, a quicker version of the algorithm avoids exploring all possible treatment assignments, instead first sampling treatment permutations from the incorrect distribution, then checking which are valid and keeping only those (adding extra draws as needed until at least $N_{sim}$ are obtained).  For full details see Algorithm 2' in \S\ref{app:large} in the Supplementary materials.

Early stopping of the recursive procedures  \textbf{CheckComponentUp} and \textbf{CheckComponentDown} can also help.  These subroutines exhaustively explore treatment assignments within a meta-component, 
tracking possible changes to the objective function as matched pairs are perturbed one by one.  Sometimes at an early stage the objective function has increased by so much that it is mathematically impossible for treatment assignments further down the recursive path to improve on the original match, in which case the recursive search can be stopped. 
For more details see Algorithm 3 in \S\ref{app:caliper} in the Supplementary materials.


\subsection{Extensions to exact matching and caliper matching}
\label{subsec:exact_caliper}
While enhancing computational performance for the inference procedure is helpful in practice, often a bigger concern is ensuring efficient computation of the initial match.  For example, in the matched study comparing HRT users to controls in Section \ref{sec:real_data}, the match required 14 times as much computation time as the inference procedure.  
For reasons discussed by \citet{pimentel2015large} and \citet{yu2020matching}, matches can often be computed more efficiently when matching is exact on
  one or more categorical variables, 
   or when a caliper is imposed on one or more continuous variables.
  It is straightforward to adapt match-adaptive inference to exact matching constraints, since this amounts to performing several separate matches within distinct categories.  Algorithm 2 can be repeated separately within each category to generate a valid permutation distribution for that segment of the overall match;  since matches across categories are forbidden an overall permutation can be obtained by independently selecting a permutation within each category.  Algorithm 2 can also be generalized to allow for a caliper on the propensity score (or whatever continuous variable is used as the matching distance), although the changes are more involved.  For full details see Algorithm 3 in \S\ref{app:caliper} in the Supplementary materials.




\section{Simulations}


 To demonstrate the empirical value of match-adaptive inference for ensuring Type I error control in finite samples, we simulate from a data-generating process in which the null hypothesis is true and compare the rates of rejection for match-adaptive inference,  covariate-adaptive inference, and uniform inference after matching.
 We draw datasets of size $n=500$ with two observed covariates $X_1$ and $X_2$ sampled as independent mean-zero normal random variables, the first with variance 5  and the second with variance 1. We construct treatment status via one of two logistic models specified as follows:
 
 \vspace{-4em}
 
\begin{align*}
\text{logit}\left[\mathbb{P}(Z=1 \mid X) \right] &=  0.1 + 0.7\cdot X_1-0.4\cdot X_2 \quad \quad \quad &\text{(linear)}\\
\text{logit}\left[\mathbb{P}(Z=1 \mid X) \right] &=  0.2 + 0.7\cdot X_1-0.4\cdot X_2 + \log\left(|X_1|\right) - 0.5\cdot X_2^2 \quad \quad \quad &\text{(nonlinear)}
\end{align*}

 \vspace{-2em}

\noindent and outcomes using one of the two following models:

 \vspace{-4em}

\begin{align*}
Y &= X_1 + 2X_2 + \epsilon \quad &\text{(linear)}  \quad \quad \quad \quad 
Y&= 4|X_1|^3 + 6\sin(X_1)+ 2X_2 + \epsilon \quad &\text{(nonlinear)}
\end{align*}

 \vspace{-2em}

\noindent where in both models the $\epsilon$ terms are independent standard normal random variables.

On each dataset so generated we fit a propensity score using logistic regression on $X_1$ and $X_2$ and conducted matching using a robust Mahalanobis distance \citep[\S 9.3]{rosenbaum2020design}, either with or without a propensity score caliper (equal to 0.2 standard deviations of the estimated propensity scores).  When used, the caliper was strictly enforced so that if some treated units could not be matched, a minimal number of treated units was excluded in order to make the match feasible.  After matching, one-sided hypothesis tests were conducted using uniform inference, covariate-adaptive randomization inference, and match-adaptive randomization inference, using  both estimated and true propensity scores for the latter two methods. Note that match-adaptive inference always 
uses the estimated propensity scores for matching and in Algorithm 2.
 We examined both the difference-in-means statistic and the regression-adjusted test statistic of \citet{rosenbaum2002covariance} that uses ordinary linear regression (on $X_1$ and $X_2$) to residualize pair differences.  We conduct 21,600 replications for each unique combination of simulation parameters and reported Type I error as estimated by the proportion of rejections at the 0.05 level.

Table \ref{tab:simresults} summarizes the results of these simulations.  With the exception of certain cases in which the propensity score model is misspecified (under which we do not expect match-adaptive inference with estimated propensity scores to perform well), the match-adaptive procedure consistently controls Type I error at or near 0.05.  In contrast, covariate-adaptive inference over the full space of permutations frequently fails to control Type I error when calipers and regression adjustment are absent, and sometimes when one of these adjustments is present.

\section{Does HRT reduce heart disease risk?}
\label{sec:real_data}
\subsection{Study population and matched design}

We conduct a matched study using data from the Women's Health Initiative study (WHI).  WHI includes both a clinical trial and observational study. Eligible women were between the ages of 50 and 79 at baseline, postmenopausal, free from medical conditions predicting a survival of less than three years, and likely to live in the same region for at least three years. Women meeting these criteria and expressing interest in participating first underwent eligibility screening for the clinical trial. 
Those who did not qualify for the clinical trial or declined participation in it were invited to be part of the observational study 
\cite{women1998design}. 
Our WHI study cohort (obtained through the BioLINCC 
\noindent at the National Heart, Lung, and Blood Institute) consists of 53,045 women who participated in the WHI observational study, had a uterus, and were not using unopposed estrogen when they enrolled in the study. We adhered to the sample selection criteria outlined in \cite{prentice2005combined} and 
\cite{yu2021information}. In line with the existing literature, we define the treated group as current users of a combined estrogen-plus progestin preparation at baseline and the control group as individuals who have never used or are former users. 
The outcome of interest is an indicator for any occurrence of coronary heart disease (CHD), defined as the occurrence of clinical myocardial infarction (MI), definite silent MI or a death due to definite CHD or possible CHD.  


We begin by constructing an optimal propensity score match between patients who received HRT (the treated) and patients who did not (the controls) within 5-year age strata (50-54 years old, 55-59 years old, 60-64 years old, 65-69 years old, 70-74 years old, 75-79 years old). The matching step took 4.23 hours. The standardized mean differences for the demographic and clinical covariates are summarized in Table~\ref{tb:balance} in \S\ref{app:whi} in the Supplementary materials. The summary statistics of the propensity score before and after matching are summarized in Table~\ref{tb:ps} in \S\ref{app:whi} in the Supplementary materials. Results in these two tables show that our matching greatly improve the covariate balance. Although the standardized mean difference for the propensity score is slightly greater than 0.2, the summary statistics are very similar in the matched treated and control groups.

\subsection{Impact of HRT on CHD in matched pairs}
Motivated by findings of heterogeneous treatment effects across age of initiation of HRT in other datasets  \citep{lobo2017hormone, cagnacci2019controversial}, 
we plan our outcome analysis to incorporate both a test for an overall protective effect of HRT across the entire matched sample and a set of subgroup tests within patient categories.
Based on our matched sample, HRT decreases CHD risk by 0.64\% on average in the overall population.  Under match-adaptive randomization inference, the effect is significant at the 0.05 level but not at the 0.01 level (p-value$=0.035$) when testing is against the 
one-sided alternative hypothesis that HRT can decrease the occurrence of CHD. We also conducted inference using uniform randomization inference and covariate-adaptive randomization inference. The sample means of null draws for uniform randomization inference, covariate adaptive randomization inference and our new procedure are 0.000, -0.001, -0.004, respectively. The computation time is around 0.3 hours for the new inference method, longer than the two other forms of inference (each took less than 3 minutes), but an order of magnitude shorter than the time taken to construct the match.
Figure \ref{fig:null_distros} gives density plots of 50,000 null draws for each method. The corresponding one-sided p-values are 0.000, 0.002, 0.035, respectively.  The match-adaptive test views the nominally beneficial estimate of the treatment effect as much less surprising 
 under the null than do the uniform and covariate-adaptive tests, such that its result is not significant at the 0.01 level;   
 in Figure \ref{fig:null_distros}, the match-adaptive null distribution concentrates in the negative part of the real line with substantially more probability mass near the observed value of the test statistic. 
Intuitively, the match-adaptive test has identified the presence of pairs with large propensity score gaps and prevented them from being permuted.   These pairs tend also to have systematically better outcomes among the treated, since the propensity score is negatively correlated with the outcome, with correlation -0.11 in the full observational study.  
Clearly, ignoring $Z$-dependence risks anticonservative inference inappropriately permuting pairs that  must remain fixed in the true conditional distribution.

\begin{figure}[h!]
	\centering
	\includegraphics[scale=0.5]{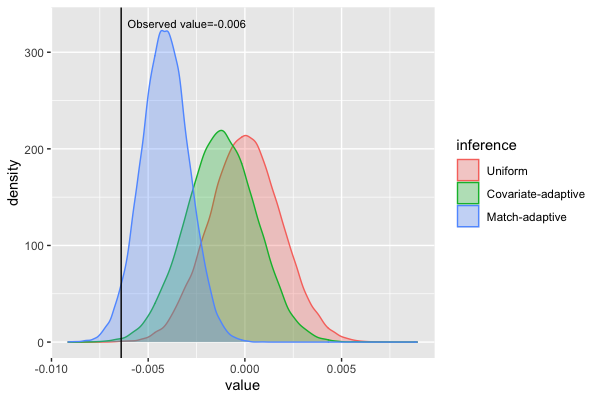}
	\caption{Null distribution for the difference-in-means estimator under uniform, covariate-adaptive, and match-adaptive inference.}
	\label{fig:null_distros}
\end{figure}

Next, in order to investigate the influence of HRT initiation age, we split the matched data into subgroups according to the age of the treated individual initiating HRT (before 50 years old, 50-59 years old, 60-69 years old, and at least 70 years old). For each subgroup, we estimate the treatment effect and calculate p-values based on match-adaptive inference. For reference, we also provide results based on uniform randomization inference and covariate-adaptive randomization inference.
Results in Table~\ref{tb:suboutcome} suggest significant beneficial effects of HRT on CHD for women who start HRT when 50-59 years old. At other age intervals, starting HRT has no significant effects reducing CHD occurrences. This conclusion is consistent with the current recommendation of initiating HRT for young and healthy women (aged 50-59 years old) \citep{cagnacci2019controversial,lobo2017hormone}.  


\begin{table}[ht]
	\centering
	\resizebox*{\textwidth}{!}{
		\begin{tabular}{rrrrrrrrr}
			\hline
			HRT Initiation Age&\multicolumn{2}{c}{$<50$ years old} & \multicolumn{2}{c}{50-59 years old} & \multicolumn{2}{c}{60-69 years old} & \multicolumn{2}{c}{70-79 years old}\\\hline
			\# of pairs &\multicolumn{2}{c}{4303}&\multicolumn{2}{c}{10498}&\multicolumn{2}{c}{2349}&\multicolumn{2}{c}{359} \\\hline
			Estimated effect&\multicolumn{2}{c}{-0.005}&\multicolumn{2}{c}{-0.006}&\multicolumn{2}{c}{-0.006}&\multicolumn{2}{c}{-0.025}\\\hline
			& null mean & p-value& null mean & p-value& null mean & p-value& null mean & p-value\\\hline
			Match-adaptive&-0.004&0.304&-0.004&0.045&-0.006&0.454&-0.003&0.107\\	
			Covariate-adaptive&-0.002&0.167&-0.001&0.012&0.000&0.170& 0.000&0.104\\
			Uniform &-0.000&0.060&0.000&0.003&-0.000&0.170&-0.000& 0.104\\\hline
	\end{tabular}}
\caption{Estimated treatment effects, mean of null distribution and p-values.}
\label{tb:suboutcome}
\end{table}

\section{Discussion}
\label{sec:discussion}

To the best of our knowledge, the newly proposed match-adaptive inference algorithm is the first procedure for matched observational studies providing explicit guarantees against $Z$-dependence, and our simulations and case study demonstrate its potential impact on empirical Type I error control.  We urge researchers to be aware of the potential for anticonservative testing when matches are not exact, and to use the machinery of covariate-adaptive and match-adaptive inference to assess the downstream impact of inexact matching.

Our findings about the impact of HRT on heart disease based on match-adaptive randomization inference agree qualitatively with the modern understanding that such effects depend on patients' initiation age.  Although we find benefits for women initiating therapy from age 50-59 when inference is conducted at the 0.05 level, and hence also reject the null hypothesis of no effect across the entire population, we fail to detect benefit for other age groups, and our inference procedures see more limited evidence for such benefits in the WHI data than  other testing methods with less comprehensive Type I error control.  Evidence about the presence or absence of an effect of hormone replacement therapy for women younger than 50 appears quite limited in the broader literature. In our study the nominal effect size for women under 50 is not much smaller than the effect size for the significant 50-59 group, and the match-adaptive randomization inference null distribution has a similar mean, but the sample size was much smaller.  Further study to interrogate possible beneficial effects for this age group seems warranted.

Several directions for future methodological work are also apparent.  First, a method of sensitivity analysis for unobserved confounding compatible with match-adaptive inference is needed. 
Although \citet{rosenbaum2002observational} introduced a comprehensive framework for sensitivity analysis in matched observational studies under uniform inference, 
adapting it to match-adaptive inference requires new technical tools.
Another direction is generalizing our algorithms to a broader class of matching methods.  While we focus on pair matching, 
when controls are prevalent or overlap is limited it can be helpful to extend the new inference framework to matching with multiple controls \citep{ming2000substantial} or full matching \citep{hansen2004full}.  
Extending our approach based on matching with distances on a univariate score  to  other common strategies such as matching on Mahalanobis distances \citep{rubin1980bias}  or under balance constraints \citep{pimentel2015large} could also greatly expand its practical applicability. 
Finally, there are interesting parallels between our contributions and the procedures of \citet{basse2019randomization} and \citet{puelz2022graph} for conducting randomization tests under interference, 
where 
the key is to identify the conditional distribution of an auxiliary event given the treatment vector. 
This is similar to our setting, in which the optimal matching partition 
 takes the place of 
 the auxiliary event. While the motivations for these procedures are different, the connection suggests a general strategy for adapting randomization tests effectively to complex designs or data generating processes.


\bibliographystyle{asa}
\bibliography{zdep_arxiv_v1}

\begin{landscape}
\begin{table}
\small
\begin{center}
\begin{tabular}{cc|cc|c|cc|cc}
\multicolumn{2}{c|}{Data-generating process} &  & &Uniform  &\multicolumn{2}{c|}{Covariate-adaptive inference} & \multicolumn{2}{c}{Match-adaptive inference} \\
$Y$-model & $Z$-model & Caliper? & Regression? & inference & Est. prop. score &  True prop. score &Est. prop. score & True prop. score \\
\hline
\multirow{4}{*}{Linear} & \multirow{4}{*}{Linear} &  & & \textbf{1.000} & \textbf{0.685} & \textbf{0.598} & {0.000} & {0.001} \\
& & X & & {0.005} & {0.003} & \textbf{0.078} & 0.034 & 0.053\\
& & & X & {0.003} & {0.007} & {0.008} & {0.001} & {0.001} \\
& & X & X & 0.033 & 0.032 & 0.034 & 0.064 & 0.064 \\\hline
\multirow{4}{*}{Nonlinear} & \multirow{4}{*}{Linear} &  & & \textbf{0.134} & {0.006} & {0.006} & {0.001} & {0.001} \\
& & X & & {0.053} & {0.050} & {0.065} & 0.048 & 0.053\\
& & & X & \textbf{0.337} & \textbf{0.366} & \textbf{0.371} & {0.001} & {0.001} \\
& & X & X & 0.046 & 0.48 & 0.049 & 0.049 & 0.053 \\\hline
\multirow{4}{*}{Linear} & \multirow{4}{*}{Nonlinear} &  & & \textbf{1.000} & \textbf{0.699} & \textbf{0.594} &{0.032} & {0.010} \\
& & X & & {0.021} & {0.015} & {0.062} & {0.013} & 0.056\\
& & & X & {0.006} & {0.011} & {0.018} & {0.029} & {0.025} \\
& & X & X & 0.038 & 0.039 & 0.043 & 0.046 & 0.048 \\\hline
\multirow{4}{*}{Nonlinear} & \multirow{4}{*}{Nonlinear} &  & & \textbf{0.987} & \textbf{0.885} & \textbf{0.486} & {0.049} & {0.019} \\
& & X & & \textbf{0.199} & \textbf{0.185} & \textbf{0.106} & \textbf{0.100} & 0.056\\
& & & X & {0.000} & {0.000} & {0.000} & {0.053} & {0.019} \\
& & X & X & \textbf{0.138} & \textbf{0.136} & {0.066} & \textbf{0.119} & 0.051\\ \hline\\
\end{tabular}
\end{center}
\caption{Results from the simulation study (21,600 replications, each dataset with 500 observations and two measured covariates).  The first four columns specify the simulation setting, and the final five columns report Type I error rates under different forms of inference: uniform inference (which does not use propensity scores), and covariate-adaptive and match-adaptive inference with estimated and true propensity scores used to calculate permutation probabilities 
{in the inference step}. Type I error estimates  are each based on 2,160 replications; and those significantly larger than 0.05 (under a Bonferroni correction scaled to the number of rates reported across the figure) are marked in bold. 
}
\label{tab:simresults}
\end{table}
\end{landscape}

\def\spacingset#1{\renewcommand{\baselinestretch}%
{#1}\small\normalsize} \spacingset{1}

\doublespacing

\setcounter{section}{0}
\renewcommand{\thesection}{\Alph{section}}
\def\thetable{S\arabic{table}}
\def\theequation{S\arabic{equation}}
\section{Additional details for the algorithm and case study}
\subsection{Details on optimal matching}
\label{app:matching}

{
In this section we give a detailed mathematical characterization of the optimization problem solved to produce the matched pairs encoded by $\mathcal{M}_{opt}$ as described in Section \ref{subsec:setup}.  For this section only, we use single-indexing of elements in $\mathbf{Z}$ and $\mathbf{X}$, e.g. $\mathbf{Z} = (Z_1, \ldots, Z_N)$  (in contrast to the double-indexing used to describe matched units elsewhere).  Let $K = \min\left\{\sum^N_{i=1}Z_i, N- \sum^N_{i=1}Z_i\right\}$ and let $d(X_i,X_j)$ be some nonnegative real-valued covariate distance defined for all $i,j \in 1, \ldots, N$.  
The investigator matches units into $K$ pairs by solving an optimization problem to minimize the average distance $d(\cdot,\cdot)$ between paired units.  Formally, letting {$\mathfrak{M}$} be the set of all collections of $K$ disjoint pairs $(i,i')$ with $i \neq i'$ and $i, i' \in \{1, \ldots N\}$, an optimal pairing is obtained by solving the following problem:
\begin{align}
\label{eqn:opt}
&\text{argmin}_{\mathcal{K} \in \mathfrak{K}}\sum_{(i,i') \in \mathcal{M}}d\left(X_{i}, X_{i'}\right) \\
&\text{s.t.}\nonumber\\
&Z_{i} + Z_{i'} = 1 \quad \text{ for all } (i,i') \in\mathcal{M}.\nonumber
\end{align}
We call the solution to this problem {$\mathcal{M}_{opt}$;  object $\mathcal{U}_{opt}$ collects the indices of the remaining unmatched units and their treatment indicators}.  

{
Note that the definition of {$\mathfrak{M}$}  given here requires that exactly $K$ pairs be formed.  In Section \ref{subsec:exact_caliper}, different versions of the optimization problem are discussed under which the total number of pairs may be less than $K$.  Under exact matching, a separate version of problem (\ref{eqn:opt}) is solved for each category of a nominal variable; since treated units may outnumber controls in some categories while controls outnumber treated units in others, the total number of pairs may be less than $K$.  Under caliper matching, pairs are only formed between units for which $d(X_i,X_j) \leq c$ for some fixed value $c$, which may make it impossible to form $K$ pairs.  Formally, define {$\mathfrak{M}^c$}  as the set of all collections of disjoint pairs $(i,i')$ for which $i \neq i'$, $i,i' \in \{1, \ldots, N\}$, and $d(X_i,X_{i'}) \leq c$, and for any {$\mathcal{M} \in \mathfrak{M}^c$} let $|\mathcal{M}|$ give the number of pairs in $\mathcal{M}$.  Then optimal matching with a caliper $c$ is conducted by solving the following optimization problem:
\begin{align}
\label{eqn:optcal}
&\text{argmin}_{{\mathcal{M} \in \mathfrak{M}^c}}\sum_{(i,i') \in \mathcal{M}}d\left(X_{i}, X_{i'}\right) \\
&\text{s.t.}\nonumber\\
&Z_{i} + Z_{i'} = 1 \quad \text{ for all } (i,i') \in\mathcal{M}.\nonumber\\
&|{\mathcal{M}}| = \max_{{\mathcal{M}}' \in {\mathfrak{M}}^c}|{\mathcal{M}}'|. \nonumber
\end{align}
For more discussion of exact matching and caliper matching as optimization problems, see \citet{rosenbaum2020modern}. 
}

\subsection{Subroutine definitions.}\label{app:subroutine}

\begin{flushleft}
\hspace{2em} \textbf{Subroutine} \emph{ConstructOverlapGraph} (input: list of paired propensity score values of form $(\lambda_1^T, \lambda_1^C), \ldots, (\lambda_K^T, \lambda_K^C)$, with treated unit's propensity score listed first):
\begin{enumerate}
\item Order all matched pairs by decreasing propensity score for the treated unit $\lambda^T_k$ and index the pairs by $k=1,\dots,K$. 
\item For $k = 1, \ldots, K$, do the following:
\begin{enumerate}[topsep=0.2pt, partopsep =0.2pt, itemsep=0.2pt, parsep = 0.2pt]
\item If $\lambda_k^T > \lambda_k^C$ and  $k < K$, then for $j = k+1, \ldots, K$ do the following:
\begin{enumerate}
\item Check if $\lambda_k^T > \lambda_j^T > \lambda_k^C$.  If so, pairs $k$ and $j$ overlap; if not, exit the (inner) loop.
\end{enumerate}
\item Otherwise (i.e., $\lambda_k^T < \lambda_k^C$), if $k > 1$, then for $j = k-1, \ldots, 1$ do the following:
\begin{enumerate}
\item Check if $\lambda_k^T < \lambda_j^T < \lambda_k^C$.  If so, pairs $k$ and $j$ overlap; if not, exit the (inner) loop.
\end{enumerate}
\end{enumerate}
\item Construct and return an undirected graph with node set $\mathcal{N} = \{1, \ldots, K\}$ and an edge set consisting of all pairs $(k,j)$ labeled overlapping.
\end{enumerate}
\end{flushleft}

\textbf{Subroutine} \emph{ConstructMetaComponents} (input: list of propensity scores for all units $\lambda_1, \ldots, \lambda_N$ and connected components of overlap graph $\mathcal{S}_1, \ldots, \mathcal{S}_r$).
\begin{enumerate}
\item Sort the connected components of the graph $\mathcal{S}_r, r=1,\dots, R$ by increasing maximum propensity, $\max\mathcal{Q}_r$.
\item Initiate a new index $i=0$. For each $r = 1, \ldots, R$:
\begin{enumerate}
	\item Let $j_*$ be the index of the unit with the largest propensity score $\lambda_*$ such that $\lambda_* < \min \mathcal{Q}_r$ (or $\lambda_*=-\infty$ if no such unit exists).
	\item If unit $j_*$ is an unmatched control or if $\lambda_*$ is not finite, increment the index $i$ by setting $i=i+1$, begin a new meta-component $\mathcal{T}_i = \{\}$, and define $s_i^\ell = \lambda_*$.
	\item Redefine $\mathcal{T}_i$ as $\mathcal{T}_i \cup\{ r\}$. 
\item Let $j^*$ be the index of the unit with the smallest propensity score $\lambda^*$ such that $\lambda^* > \max \mathcal{Q}_r$ (or $\lambda^*=\infty$ if no such unit exists).
	\item If unit $j^*$ is an unmatched control or if $\lambda^*$ is not finite, define $s_i^u = \lambda^*$.
\end{enumerate}
\item Define $I$ as the final value of $i$ and return meta-component information $\{\mathcal{T}_i, s_i^\ell, s_i^u\}_{i=1}^I$.
\end{enumerate}

\textbf{Subroutine} \emph{CheckComponentUp} (input: meta-component information $\mathcal{O} = \{\mathcal{T}_i, s_i^\ell, s_i^u\}_{i=1}^I$ of the type returned by subroutine \emph{ConstructMetaComponents},  index $j \in \{1, \ldots, |\mathcal{T}_i|+1\},$, new running cost $C \in \mathbb{R}$, old running cost $D \in \mathbb{R}$, propensity score $q_{max} \in \mathbb{R}$, and component-wise permutation indicators $\mathbf{w}\in \{0,1\}^{|\mathcal{T}_i|}$ ):
\begin{enumerate}
\item If $j = |\mathcal{T}_i|+1$, add $\mathbf{w}$ to list of compatible vectors for meta-component $i$ and return.
\item Letting $j'$ be the index of the $j$th component in $\mathcal{T}_i$, set $C' = C + \min \mathcal{Q}_{j'} - q_{max}$, set $D' = D + \max \mathcal{Q}_{j'} - \min \mathcal{Q}_{j'}$, and $q_{max}' = \max \mathcal{Q}_{j'}$.
\item If the lowest unit in the $j$th component in $\mathcal{T}_i$ is treated, do the following:
\begin{enumerate}
\item Set the $j$th coordinate of $\mathbf{w}$ to 1 and call \emph{CheckComponentUp}$(\mathcal{O},j+1,C,D, q_{max}, \mathbf{w})$.
\item If $C' \geq D'$, also set the $j$th coordinate of $\mathbf{w}$ to 0 and call \emph{CheckComponentUp}$(\mathcal{O},j+1, C',D',q_{max}',\mathbf{w})$.
\end{enumerate}
\item Otherwise:
\begin{enumerate}
\item Set the $j$th coordinate of $\mathbf{w}$ to 0 and call \emph{CheckComponentUp}$(\mathcal{O},j+1,C,D, q_{max}, \mathbf{w})$.

\item If $C' \geq D'$, also set the $j$th coordinate of $\mathbf{w}$ to 1 and call \emph{CheckComponentUp}$(\mathcal{O},j+1, C',D',q_{max}',\mathbf{w})$.

\end{enumerate}

\end{enumerate}

\textbf{Subroutine} \emph{CheckComponentDown}  (input: meta-component information $\mathcal{O} = \{\mathcal{T}_i, s_i^\ell, s_i^u\}_{i=1}^I$ of the type returned by subroutine \emph{ConstructMetaComponents},  index $j \in \{1, \ldots, |\mathcal{T}_i|+1\},$, new running cost $C \in \mathbb{R}$, old running cost $D \in \mathbb{R}$, propensity score $q_{min} \in \mathbb{R}$, and component-wise permutation indicators $\mathbf{w}\in \{0,1\}^{|\mathcal{T}_i|}$ ):
\begin{enumerate}
\item If $j = 0$, add $\mathbf{w}$ to list of compatible vectors for meta-component $i$ and return.
\item Letting $j'$ be the index of the $j$th component in $\mathcal{T}_i$, set $C' = C + q_{min} - \max \mathcal{Q}_{j'}$, set $D' = D + \max \mathcal{Q}_{j'} - \min \mathcal{Q}_{j'}$, and $q_{min}' = \min \mathcal{Q}_{j'}$.
\item If the highest unit in the $j$th component in $\mathcal{T}_i$ is treated, do the following:
\begin{enumerate}
	\item Set the $j$th coordinate of $\mathbf{w}$ to 1 and call \emph{CheckComponentDown}$(\mathcal{O},j-1,C,D, q_{min}, \mathbf{w})$.
\item If $C' \geq D'$, also set the $j$th coordinate of $\mathbf{w}$ to 0 and call \emph{CheckComponentDown}$(\mathcal{O},j-1, C',D',q_{min}',\mathbf{w})$.
\end{enumerate}
\item Otherwise:
\begin{enumerate}
\item Set the $j$th coordinate of $\mathbf{w}$ to 0 and call \emph{CheckComponentDown}$(\mathcal{O},j-1,C,D, q_{min}, \mathbf{w})$.

\item If $C' \geq D'$, also set the $j$th coordinate of $\mathbf{w}$ to 1 and call \emph{CheckComponentDown}$(\mathcal{O},j-1, C',D',q_{min}',\mathbf{w})$.
\end{enumerate}
\end{enumerate}

\subsection{Adjustments for settings with very large meta-components}\label{app:large}

{
An important practical issue with {Algorithm 2} is that it runs orders of magnitude more slowly for matches with at least one large (in the sense of having many distinct sub-components) meta-component.  This is because the number of recursive calls grows exponentially in the size of the meta-component.   As discussed in Section \ref{subsec:speedup}, when inference is conducted approximately by taking a large number of draws $N_{sim}$ from the permutation distribution, it is be inefficient to calculate the full support of the match-adaptive distribution of treatment indicators in a given meta-component when that support is much larger than $N_{sim}$; most of the recursive calls refer to treatment vectors that need never be considered.  Algorithm 2' below modifies Algorithm 2 to use a rejection sampling approach, first drawing candidate treatment vectors and limiting recursive exploration to this family of treatment vectors, repeating as needed until $N_{sim}$ valid draws are obtained
In practice, one can use the approach of Algorithm 2' for the large meta-components only and sample from the small meta-components using the approach of Algorithm 2 instead.}

{
A second opportunity to boost computational performance for large meta-components is early stopping.  Many of the recursive calls performed by Algorithm 2 in large meta-components form new matched pairs that skip over several interior components, paths which are unlikely to lead to better matches because of the large costs incurred.  In many practical situations, it may be possible to stop recursing early in these cases by confirming that there is no hope of bringing the new path's cost below the cost of the pairs eliminated from the original match; for example, if the gap between the new path cost and the old path cost exceeds the remaining distance between the current propensity score location and the end of the meta-component, no such hope exists. To improve computational performance, we propose a short-circuiting approach where we check the remaining space in the meta-component as an upper bound for how much further improvement the new path can offer and stopping early (accepting all downstream treatment allocations) when the upper bound does not suffice to improve the match.  These modifications are implemented in Algorithm 3 in Section \ref{app:caliper} below, along with adjustments for caliper matching.}

\begin{flushleft}
\hspace{2em} \textbf{Algorithm 2‘}
\begin{enumerate}
	\item Run {steps 1-2} of Algorithm 2 to construct connected components and meta-components.
	\item Draw $N_{sim}$ random treatment permutations that are compatible with Algorithm 1. 
	
	For each treatment permutation, record the permutation indicators as $\mathbf{w}$ and collect these vectors into a (multi)set $\mathcal{W}_{initial}$.
	
	 For each $i =1, \ldots, I$:
\begin{enumerate}
\item Define a subroutine: \textbf{CheckComponentUp}$( j, C, D, q_{max}, \mathbf{w})$ where $C,D, q_{max} \in \mathbb{R}, j \in \{1, \ldots, |\mathcal{T}_i|+1\},$ and $\mathbf{w} \in \{0,1\}^{|\mathcal{T}_i|}$:
\begin{enumerate}
\item If no member of $\mathcal{W}_{initial}$ shares the first $j$ elements of $\mathbf{w}$ for meta-component $i$, return.
\item If $j = |\mathcal{T}_i|+1$, return.
\item Letting $j'$ be the index of the $j$th component in $\mathcal{T}_i$, set $C' = C + \min \mathcal{Q}_{j'} - q_{max}$, set $D' = D + \max \mathcal{Q}_{j'} - \min \mathcal{Q}_{j'}$, and $q_{max}' = \max \mathcal{Q}_{j'}$.
\item If the lowest unit in the $j$th component in $\mathcal{T}_i$ is treated, do the following:
\begin{enumerate}
\item Set the $j$th coordinate of $\mathbf{w}$ to 1 and call \textbf{CheckComponentUp}$(j+1,C,D, q_{max}, \mathbf{w})$.
\item If $C' < D'$ remove all elements of ,$\mathbf{W}_{initial}$ that share the first $j$ elements of $\mathbf{w}$ in component $i$.  Otherwise,  set the $j$th coordinate of $\mathbf{w}$ to 0 and call \textbf{CheckComponentUp}$(j+1, C',D',q_{max}',\mathbf{w})$.
\end{enumerate}
\item Otherwise:
\begin{enumerate}
\item Set the $j$th coordinate of $\mathbf{w}$ to 0 and call \textbf{CheckComponentUp}$(j+1,C,D, q_{max}, \mathbf{w})$.

\item If $C' < D'$ remove all elements of ,$\mathbf{W}_{initial}$ that share the first $j$ elements of $\mathbf{w}$ in component $i$.  Otherwise,   set the $j$th coordinate of $\mathbf{w}$ to 1 and call \textbf{CheckComponentUp}$(j+1, C',D',q_{max}',\mathbf{w})$.

\end{enumerate}

\end{enumerate}
\item Run \textbf{CheckComponentUp}$(1,0,0,s_i^\ell, \mathbf{w})$  for $\mathbf{w}=(0,...,0)$ and $\mathbf{w}=(1,0,\ldots,0)$.  
\item Define another subroutine: \textbf{CheckComponentDown}$( j, C,D, q_{min}, \mathbf{w})$ where $C,D, q_{min} \in \mathbb{R}, j \in \{0, \ldots, |\mathcal{T}_i|\},$ and $\mathbf{w} \in \{0,1\}^{|\mathcal{T}_i|}$:
\begin{enumerate}
\item If $j = 0$, add $\mathbf{w}$ to list of compatible vectors for meta-component $i$ and return.
\item Letting $j'$ be the index of the $j$th component in $\mathcal{T}_i$, set $C' = C + q_{min} - \max \mathcal{Q}_{j'}$, set $D' = D + \max \mathcal{Q}_{j'} - \min \mathcal{Q}_{j'}$, and $q_{min}' = \min \mathcal{Q}_{j'}$.
\item If the highest unit in the $j$th component in $\mathcal{T}_i$ is treated, do the following:
\begin{enumerate}
	\item Set the $j$th coordinate of $\mathbf{w}$ to 1 and call \textbf{CheckComponentDown}$(j-1,C,D, q_{min}, \mathbf{w})$.
\item  If $C' < D'$ remove all elements of ,$\mathbf{W}_{initial}$ that share the first $j$ elements of $\mathbf{w}$ in component $i$.  Otherwise, set the $j$th coordinate of $\mathbf{w}$ to 0 and call \textbf{CheckComponentDown}$(j-1, C',D',q_{min}',\mathbf{w})$.
\end{enumerate}
\item Otherwise:
\begin{enumerate}
\item Set the $j$th coordinate of $\mathbf{w}$ to 0 and call \textbf{CheckComponentDown}$(j-1,C,D, q_{min}, \mathbf{w})$.

\item  If $C' < D'$ remove all elements of ,$\mathbf{W}_{initial}$ that share the first $j$ elements of $\mathbf{w}$ in component $i$.  Otherwise, set the $j$th coordinate of $\mathbf{w}$ to 1 and call \textbf{CheckComponentDown}$(j-1, C',D',q_{min}',\mathbf{w})$.
\end{enumerate}
\end{enumerate}
\item Run \textbf{CheckComponentDown}$(|\mathcal{T}_i|,0,0,s_i^u, \mathbf{w})$  for $\mathbf{w}=(0,...,0)$ and $\mathbf{w}=(1,0,\ldots,0)$.  
\item If $|\mathcal{W}_{initial}|$ < $N_{sim}$, return to the beginning of step 2, and draw new permutation vectors as in step 2.  If any are already members of $\mathcal{W}_{initial}$,  add the new copies to $\mathcal{W}_{initial}$ too.  Repeat 2(a)-2(d) on the remaining new permutations and add whichever are retained to $\mathcal{WL}_{initial}$.  Repeat until $N_{sim}$ valid permutations are obtained.

\end{enumerate}
\item For each $j = 1, \ldots, N_{sim}$ concatenate chosen treatments across the $I$ meta-components to create a single global treatment permutation vector.  The resulting $N_{sim}$ vectors are used as null draws of $\mathbf{Z}$.
\end{enumerate} 
\end{flushleft}

\subsection{Extension to caliper matching
}\label{app:caliper}

Algorithm 2 is not valid for propensity score matching under a propensity score caliper {(as specified in problem (\ref{eqn:optcal}) with $d(X_i,X_j)$ given by an absolute propensity distance)}.  In particular, it may reject treatment draws even if the objective can only be improved by introducing pairs that violate the caliper, since the caliper is not explicitly accounted for or checked when new pairs across connected components are considered by the recursive procedures.  However, it is possible to adjust the algorithm to do such checking.  

First of all, we consider the easier case where all treated units are matched within the propensity score caliper under the original treatment assignment. If there are no unmatched controls in this case, calipers do not introduce problems for Algorithm 1 (we can still rule out all the same treatment draws because the new pairs formed by resolving a crossing match always have strictly smaller pair distances than the original pairs and must therefore satisfy the caliper). Similarly, in the presence of unmatched controls, the caliper do not introduce any issues for reconfiguring pairs within a connected component when the most extreme observerations in the component are removed as is done by the recursive procedures in Algorithm 2, since there is always a way to do this that leads to strictly smaller pair distances than in the original match.  Thus it suffices to add checks for whether the new matches formed across the original connected components satisfy the caliper, and (if not) whether there is a way to reconfigure the pairs to avoid the caliper violation. 

Another challenge with using a caliper is that it may exclude some treated units that have no candidate controls within the caliper. In this case, we need to check for two distinct problems: (1) settings where more matched pairs can be formed (2) settings where bringing an unmatched treated unit into the match improves the objective without changing the number of pairs.  Adaptations to the recursive algorithm make this possible.

These adaptations, as well as the early stopping procedure discussed in Section \ref{app:large}, are implemented together in Algorithm 3.

\textbf{Algorithm 3}
\begin{enumerate}
	\item Run steps 1-3 of Algorithm 1 (i.e. create overlap graph and its connected components).
	\item Let $U$ indicate whether unmatched treated units are present.
	\item Sort the connected components of the graph $\mathcal{S}_k, k=1,\dots, K$ by increasing maximum propensity, $\max\mathcal{Q}_k$.
	\item (Construct the meta-components)  Initiate a new index $i=0$. For each $k = 1, \ldots, K$:
	\begin{enumerate}
		\item Let $j_*$ be the index of the unit with the largest propensity score $\lambda_*$ such that $\lambda_* < \min \mathcal{Q}_k$ (or $\lambda_*=-\infty$ if no such unit exists).
		\item If unit $j_*$ is unmatched or if $\lambda_*$ is not finite, increment the index $i$ by setting $i=i+1$, begin a new meta-component $\mathcal{T}_i = \{\}$, and define $s_i^\ell = \lambda_*$ and $H_i^\ell = \left\{\begin{array}{cc}Z_{j_*} & \text{ if $s_i^\ell$ is finite}\\ NA & \text{otherwise.}\end{array}\right.$
		\item Redefine $\mathcal{T}_i$ as $\mathcal{T}_i \cup\{ k\}$.  
		\item Let $j^*$ be the index of the unit with the smallest propensity score $\lambda^*$ such that $\lambda^* > \max \mathcal{Q}_k$ (or $\lambda^*=\infty$ if no such unit exists).
		\item If unit $j^*$ is  unmatched  or if $\lambda^*$ is not finite, define $s_i^u = \lambda^*$ and $H_i^u = \left\{\begin{array}{cc}Z_{j^*} & \text{ if $s_i^u$ is finite}\\ NA & \text{otherwise.}\end{array}\right.$
		\item If $U=1$ and both $\lambda_*$ and $\lambda^*$ are finite, run the caliper algorithm of \citet{ruzankin2020fast} on the units in component $k$ plus a unit with propensity score $\lambda_*$ having the same treatment status as the lowest unit in $\mathcal{Q}_k$ and a unit with propensity score $\lambda^*$ having the same treatment status as the largest unit in $\mathcal{Q}_k$.  {If all units can be matched under the caliper, mark component $k$ as bypassable and define $t_k^\ell$ and $t_k^u$ as the smallest (largest) propensity score for a unit in component $k$ with the opposite treatment status to the unit with propensity score $\lambda_*$ ($\lambda^*$).} 
	\end{enumerate}
	\item Set $I = i$.
	
	\item (Preprocess treatment assignments for each meta-component) For each $i =1, \ldots, I$:
	\begin{enumerate}
		\item Define a subroutine: \textbf{CheckComponentUp}$( j, \Delta, q_{max}, \mathbf{w})$ where $D, q_{max} \in \mathbb{R}, j \in \{1, \ldots, |\mathcal{T}_i|+1\},$ and $\mathbf{w} \in \{0,1\}^{|\mathcal{T}_i|}$:
		\begin{enumerate}
			\item (\emph{Check for end of meta-component}): if $j = |\mathcal{T}_i|+1$ do the following:
			\begin{enumerate}
				\item If any one of the following is true:
				\begin{itemize}
					\item Either $H_i^\ell = NA$ or $H_i^u = NA$;
					\item  $H_i^\ell =  H_i^u$;
					\item $s_i^u - q_{\max}$ exceeds the caliper;
				\end{itemize}
				then  add $\mathbf{w}$ to list of compatible vectors for meta-component $i$.
				\item Return.
			\end{enumerate}
			\item (\emph{Determine whether to bypass component}).  Let $Z_\ell$ be the treatment status of the lowest unit in the $j$th component in $\mathcal{T}_i$, and define bypass indicator $B$ as follows:
			\[
			B = \left\{\begin{array}{cc} 1 & \text{ if } Z_\ell(1- w_j )+ (1-Z_{\ell})w_j = H_i^\ell \\ 0 & \text{otherwise}  \end{array}\right.
			\]
			\item  Let $j'$ be the index of the $j$th component in $\mathcal{T}_i$.  If $B = 0$, do the following:
			\begin{enumerate}
				\item Let $\Delta' = \Delta + \min \mathcal{Q}_{j'} - q_{max} - \left( \max \mathcal{Q}_{j'} - \min \mathcal{Q}_{j'}\right) $.
				\item Let $q_{match} = \min \mathcal{Q}_{j'}$ and $q_{max}' =  \max \mathcal{Q}_{j'}$.
			\end{enumerate}
			If instead $B=1$ and component $j'$ is bypassable:
			\begin{enumerate}
				\item Let $\Delta' = \Delta + t_{j'}^u - q_{max}$.
				\item Let $q_{match} = t_{j'}^\ell$ and $q_{max}' =  t_{j'}^u$. 
			\end{enumerate}
			\item (\emph{Early stopping}) If any of the following is true:
			\begin{itemize}
				\item (\emph{Caliper violation}) B=1 and component $j'$ is not bypassable;
				\item  (\emph{Caliper violation}) $q_{match} - q_{\max}$ exceeds the caliper;
				\item (\emph{Current path too poor})  $\Delta' > \max \mathcal{Q}_{j_{\max}} - q_{max}'$ (where $j_{\max}$ is the index of the highest component in $\mathcal{T}_i$) and either at least one of $H_i^u$ and $H_i^\ell$ is NA or $H_i^u=H_i^\ell$ ;
			\end{itemize}
			then construct all vectors $\mathbf{w}' \in \{0,1\}^{|\mathcal{T}_i|}$ such that the first $j$ elements of $\mathbf{w}'$ are identical to the first $j$ elements of $\mathbf{w}$, add them all to the list of compatible vectors for meta-component $i$, and return.
							\item (\emph{Check for improved objective}) If $\Delta' < 0$, return.
			\item (\emph{Recurse}):
			\begin{enumerate}
				\item Call \textbf{CheckComponentUp}$(j+1, \Delta',q_{\max}',\mathbf{w})$.
				\item If $j <  |\mathcal{T}_i|$, switch the $(j+1)$th element of $\mathbf{w}$ to the opposite status and call \textbf{CheckComponentUp}$(j+1, \Delta',q_{max}',\mathbf{w})$ again.
			\end{enumerate}
		\end{enumerate}
		\item Let $\mathbf{w}^1=(1,0..,0)$ and $\mathbf{w}^0=(0,..,0)$.    Run \textbf{CheckComponentUp}$(1,0,s_i^\ell, \mathbf{w}^0)$ and \textbf{CheckComponentUp}$(1,0,s_i^\ell, \mathbf{w}^1)$.  After both functions complete, label the resulting set of compatible vectors $\mathcal{W}_{up}$.  
		
		\item Define another subroutine: \textbf{CheckComponentDown}$( j, \Delta, q_{min}, \mathbf{w})$ where $D, q_{min} \in \mathbb{R}, j \in \{1, \ldots, |\mathcal{T}_i|+1\},$ and $\mathbf{w} \in \{0,1\}^{|\mathcal{T}_i|}$:
		\begin{enumerate}
			\item (\emph{Check for end of meta-component}): if $j = 0$ do the following:
			\begin{enumerate}
				\item If any one of the following is true:
				\begin{itemize}
					\item Either $H_i^\ell = NA$ or $H_i^u = NA$;
					\item  $H_i^\ell =  H_i^u$;
					\item $q_{\min} - s_i^\ell$ exceeds the caliper;
				\end{itemize}
				then add $\mathbf{w}$ to list of compatible vectors for meta-component $i$.
				\item Return.
			\end{enumerate}
			\item (\emph{Determine whether to bypass component}).  Let $Z_u$ be the treatment status of the largest unit in the $j$th component in $\mathcal{T}_i$, and define bypass indicator $B$ as follows:
			\[
			B = \left\{\begin{array}{cc} 1 & \text{ if } Z_u(1- w_j )+ (1-Z_{u})w_j = H_i^u \\ 0 & \text{otherwise}  \end{array}\right.
			\]
			\item  Let $j'$ be the index of the $j$th component in $\mathcal{T}_i$.  If $B = 0$, do the following:
			\begin{enumerate}
				\item Let $\Delta' = \Delta + q_{\min} - \max \mathcal{Q}_{j'} - \left( \max \mathcal{Q}_{j'} - \min \mathcal{Q}_{j'}\right) $.
				\item Let $q_{match} = \max \mathcal{Q}_{j'}$ and $q_{\min}' =  \min \mathcal{Q}_{j'}$.
			\end{enumerate}
			If instead $B=1$ and component $j'$ is bypassable:
			\begin{enumerate}
				\item Let $\Delta' = \Delta + q_{\min} - t_{j'}^\ell$.
				\item Let $q_{match} = t_{j'}^u$ and $q_{min}' =  t_{j'}^\ell$.
			\end{enumerate}
			\item (\emph{Early stopping}) If any of the following is true:
			\begin{itemize}
				\item (\emph{Caliper violation}) B=1 and component $j'$ is not bypassable;
				\item  (\emph{Caliper violation}) $q_{\min} - q_{match}$ exceeds the caliper;
				\item (\emph{Current path too poor})  $\Delta' > q_{min}' -  \min \mathcal{Q}_{j_{\min}} $ (where $j_{\min}$ is the index of the lowest component in $\mathcal{T}_i$) and either at least one of $H_i^u$ and $H_i^\ell$ is NA or $H_i^u=H_i^\ell$;
			\end{itemize}
			then construct all vectors $\mathbf{w}' \in \{0,1\}^{|\mathcal{T}_i|}$ such that the last $j$ elements of $\mathbf{w}'$ are identical to the last $j$ elements of $\mathbf{w}$, add them all to the list of compatible vectors for meta-component $i$, and return.
				\item (\emph{Check for improved objective}) If $\Delta' < 0$, return.
			\item (\emph{Recurse}):
			\begin{enumerate}
				\item Call \textbf{CheckComponentDown}$(j-1, \Delta',q_{\min}',\mathbf{w})$.
				\item If $j >1$, switch the $(j-1)$th element of $\mathbf{w}$ to the opposite status and call \textbf{CheckComponentUp}$(j-1, \Delta',q_{min}',\mathbf{w})$ again.
			\end{enumerate}
			
		\end{enumerate}
		\item Let $\mathbf{w}^1=(0,..,0,1)$ and $\mathbf{w}^0=(0,..,0)$.    Run \textbf{CheckComponentDown}$(|\mathcal{T}_i|,0,s_i^u, \mathbf{w}^0)$ and \textbf{CheckComponentDown}$(|\mathcal{T}_i|,0,s_i^u,\mathbf{w}^1)$.  After both functions complete, label the resulting set of compatible vectors $\mathcal{W}_{down}$ and define $\mathcal{W} = \mathcal{W}_{up} \cup \mathcal{W}_{down}$.
	\end{enumerate}

	\item Draw $\mathbf{W}$ by sampling $\mathbf{w_i}$ in meta-component $i$  from $\mathcal{W}_i$.
	
	\item Choose $\mathbf{Z}$ by setting all subjects in each component to the corresponding treatment assignment in $\mathbf{W}$.

\end{enumerate} 

\subsection{Covariate balance checks for the WHI study}\label{app:whi}

See Tables \ref{tb:balance} and \ref{tb:ps}.

\begin{table}[h!]
	\singlespacing
	\centering
		\small
		\begin{tabular}{lrrr|rr}  \hline
			& \multicolumn{3}{c|}{Covariate means}&\multicolumn{2}{c}{Std. mean diffs.}\\
			& & Control & Control & After & Before\\
			& Treated & (after match) & (before match) & match &  match \\   \hline
			Sample size & 17509 & 17509& 35536 & 17509& 35536\\\hline
			Age & 60.78 & 60.83 & 64.71 & -0.01 &\textbf{ -0.56} \\   
			Maternal history of MI & 0.18 & 0.18 & 0.18 & -0.01 & -0.01 \\   
			Maternal history of MI unknown & 0.02 & 0.02 & 0.02 & -0.01 & -0.03 \\   
			Maternal history of MI missing & 0.63 & 0.64 & 0.65 & -0.01 & -0.02 \\   
			Paternal history of MI & 0.32 & 0.31 & 0.29 & 0.02 & 0.07 \\   
			Paternal history of MI unknown & 0.02 & 0.02 & 0.02 & -0.01 & -0.03 \\   
			Paternal history of MI missing & 0.58 & 0.59 & 0.60 & -0.01 & -0.05 \\   
			Oral contraceptive use& 0.53 & 0.49 & 0.35 & 0.09 & \textbf{0.39} \\   
			Former smoker & 0.46 & 0.44 & 0.41 & 0.05 & 0.10 \\   
			Current smoker & 0.05 & 0.06 & 0.07 & -0.05 & -0.07 \\   
			Smoking missing & 0.01 & 0.01 & 0.01 & -0.00 & -0.01 \\   
			Hypertension ever & 0.24 & 0.25 & 0.32 & -0.01 & -0.16 \\   
			Hypertension missing& 0.01 & 0.01 & 0.01 & 0.00 & -0.02 \\   
			High cholesterol ever & 0.11 & 0.11 & 0.15 & -0.02 & -0.12 \\   
			Cholesterol missing & 0.02 & 0.02 & 0.02 & 0.02 & -0.04 \\   
			Diabetes ever & 0.03 & 0.03 & 0.06 & -0.03 & -0.15 \\   
			Diabetes missing& 0.00 & 0.00 & 0.00 & 0.00 & -0.01 \\   
			Bilateral oophorectony & 0.00 & 0.01 & 0.01 & -0.01 & -0.02 \\   
			Bilateral oophorectony missing& 0.00 & 0.00 & 0.01 & 0.00 & -0.03 \\   
			BMI & 25.82 & 26.49 & 27.47 & -0.12 & \textbf{-0.30} \\   
			BMI missing& 0.01 & 0.01 & 0.01 & -0.01 & -0.03 \\   
			Years of education & 15.68 & 15.42 & 14.86 & 0.09 & \textbf{0.29} \\   
			Education  missing & 0.01 & 0.01 & 0.01 & -0.01 & -0.02 \\   
			Family income (\$1000s) & 64.4 & 58.7 & 49.4 & 0.16 & \textbf{0.42} \\   
			Family income missing & 0.04 & 0.04 & 0.05 & -0.02 & -0.06 \\   
			Family income unknown & 0.02 & 0.02 & 0.04 & -0.02 & -0.08 \\   
			Black & 0.03 & 0.05 & 0.08 & -0.08 & \textbf{-0.24} \\   
			Hispanic & 0.03 & 0.04 & 0.04 & -0.05 & -0.06 \\   
			Asian & 0.04 & 0.04 & 0.03 & 0.02 & 0.06 \\   
			Age at menopause & 50.49 & 50.49 & 50.19 & 0.00 & 0.06 \\   
			Physical function score & 86.41 & 85.61 & 81.40 & 0.04 & \textbf{0.27} \\   
			Physical function score missing & 0.01 & 0.01 & 0.02 & 0.00 & -0.06 \\   
			History of breast cancer & 0.01 & 0.02 & 0.08 & -0.07 & \textbf{-0.36} \\   
			History of breast cancer missing & 0.01 & 0.01 & 0.01 & -0.00 & -0.02 \\   
			Propensity score & 0.44 & 0.40 & 0.28 & \textbf{0.23} & \textbf{0.93} \\    
			\hline
		\end{tabular}
		\caption{Balance table: Covariate means in the treated group, matched control group, and full control group; standardized mean differences after an before matching. Standardized mean differences with absolute value greater than 0.2 are \textbf{bolded}.}
	\label{tb:balance}
\end{table}

\begin{table}[h!]
\singlespacing
	\small
	\centering
	\begin{tabular}{rrrrrrr}  \hline 
		& Min. & 1st Qu. & Median & Mean & 3rd Qu. & Max. \\   \hline
		Before Matching & 0.00 & 0.19 & 0.31 & 0.33 & 0.47 & 0.88 \\   
		All Treated & 0.01 & 0.31 & 0.44 & 0.44 & 0.56 & 0.85 \\   
		All Control & 0.00 & 0.15 & 0.26 & 0.28 & 0.39 & 0.88 \\ \hline
		After Matching & 0.01 & 0.30 & 0.41 & 0.42 & 0.53 & 0.88 \\   
		Matched Treated & 0.01 & 0.31 & 0.44 & 0.44 & 0.56 & 0.85 \\   
		Matched Control & 0.01 & 0.29 & 0.39 & 0.40 & 0.50 & 0.88 \\    \hline
	\end{tabular}
\caption{Propensity score summary statistics.}
\label{tb:ps}
\end{table}

\section{Proofs of main results}
\subsection{Supporting lemmas for proofs}

In this section we present several lemmas helpful for proving Propositions \ref{prop:condsupport}-\ref{prop:unmatched}.  All place restrictions on the set of matches that can be optimal in a given configuration of treated and control propensity scores.  Only the first two lemmas are needed to prove Proposition \ref{prop:condsupport}; we do not offer a proof of the first since the result and its proof are given in \citet{saevje2021inconsistency}.


\begin{lemma}[\citet{saevje2021inconsistency}]
	\label{lem:nocross}
	An optimal propensity score match contains no crossing matches.
\end{lemma}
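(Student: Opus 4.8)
The plan is a one-step exchange argument. Suppose, for contradiction, that $\mathcal{M}$ solves the optimal matching problem \eqref{eqn:opt} and contains a crossing match on two distinct pairs $k_1,k_2$; relabel units within each pair so that the treated units occupy position $1$, i.e.\ $Z_{k_11}=Z_{k_21}=1$. Abbreviate the four propensity scores involved by $a=\lambda(X_{k_11})$ and $c=\lambda(X_{k_21})$ for the treated units and $b=\lambda(X_{k_12})$ and $d=\lambda(X_{k_22})$ for the controls, so that the crossing condition reads $\max\{a,d\}<\min\{b,c\}$ (in particular $a<b$ and $d<c$).

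First I would construct a competing matched configuration $\mathcal{M}'$ by deleting the pairs $k_1$ and $k_2$ from $\mathcal{M}$ and inserting the two ``uncrossed'' pairs $\{(k_1,1),(k_2,2)\}$ and $\{(k_2,1),(k_1,2)\}$ instead, leaving every other pair of $\mathcal{M}$ unchanged. Each new pair contains exactly one treated and one control unit, the collection remains disjoint, and the number of pairs is still $K$, so $\mathcal{M}'$ is feasible for \eqref{eqn:opt}.

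The remaining (and only computational) step is to compare objective values. Since the two configurations agree on every pair other than $k_1,k_2$, the difference of objectives equals
\[
\big(|a-b|+|c-d|\big)-\big(|a-d|+|b-c|\big).
\]
Using $|a-b|+|c-d|=(b+c)-(a+d)$ together with $|a-d|=\max\{a,d\}-\min\{a,d\}$, $|b-c|=\max\{b,c\}-\min\{b,c\}$, $a+d=\max\{a,d\}+\min\{a,d\}$, and $b+c=\max\{b,c\}+\min\{b,c\}$, this difference collapses to $2\big(\min\{b,c\}-\max\{a,d\}\big)$, which is strictly positive precisely by the crossing condition. Hence $\mathcal{M}'$ has strictly smaller objective than $\mathcal{M}$, contradicting optimality; equivalently, no optimal match contains a crossing match.

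There is essentially no hard step here. The only things to be careful about are that $\mathcal{M}'$ is genuinely feasible (automatic, since disjointness, the one-treated/one-control structure, and the pair count are all preserved) and the sign bookkeeping in the telescoping identity; if one prefers to avoid the $\max$/$\min$ rewrite, the identity can instead be verified by splitting into the four cases determined by the orderings of $a$ versus $d$ and of $b$ versus $c$, in which the improvement equals $2(c-a)$, $2(b-a)$, $2(c-d)$, or $2(b-d)$ respectively, all strictly positive under the crossing condition.
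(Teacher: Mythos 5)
Your proof is correct: the swap preserves feasibility, and the objective comparison indeed telescopes to $2\bigl(\min\{b,c\}-\max\{a,d\}\bigr)>0$ under the crossing condition. The paper itself offers no proof of this lemma, deferring to \citet{saevje2021inconsistency}, but your one-step exchange argument is exactly the standard reasoning the paper alludes to when it remarks that ``switching which treated and controls are matched leads to a strictly lower objective function,'' so there is nothing further to add.
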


\begin{lemma}
	Suppose that there exists non-overlapping subsets $\mathcal{P}_1, \mathcal{P}_2, \ldots, \mathcal{P}_L$ of the unit interval where each $\mathcal{P}_i$ is a closed interval and where each $\mathcal{P}_i$ contains identical numbers of treated and control individuals (i.e. the propensity scores of those individuals lie in $\mathcal{P}_i$).  Then an optimal propensity score match pairs units only within the same interval $\mathcal{P}_i$.
	\label{lem:partition}
\end{lemma}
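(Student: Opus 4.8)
The plan is to prove the statement by a purely local re-matching (exchange) argument, in the spirit of the no-crossing-match argument behind Lemma~\ref{lem:nocross} but not actually requiring it; the only structure used is that the matching objective is a sum of absolute differences $|\lambda_i-\lambda_j|$. Order the intervals so that $\mathcal{P}_1 < \mathcal{P}_2 < \cdots < \mathcal{P}_L$ along the propensity score axis (possible since they are disjoint closed intervals and, by assumption, no two units share a score), and take the $\mathcal{P}_i$ to cover all matched units, so that every prefix union $\mathcal{P}_1\cup\cdots\cup\mathcal{P}_i$, being a union of balanced blocks, again contains equally many treated and control units. First I would reduce to a statement about one balanced cut: fix $i\in\{1,\dots,L-1\}$ and a threshold $\tau$ lying strictly between $\max\mathcal{P}_i$ and $\min\mathcal{P}_{i+1}$, and show that $\mathcal{M}_{opt}$ contains no pair with one member's score below $\tau$ and the other's above $\tau$. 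If this holds for every $i$, then a pair with members in $\mathcal{P}_i$ and $\mathcal{P}_j$ with $i<j$ would straddle the cut between $\mathcal{P}_i$ and $\mathcal{P}_{i+1}$; hence every pair lies inside a single $\mathcal{P}_i$, which is the claim.

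The next step is a counting observation. Let $L^-=\{u:\lambda_u<\tau\}$, which contains $M$ treated and $M$ control units. Suppose toward a contradiction that some matched pair crosses $\tau$; among the crossing pairs let $p$ be the number whose treated member lies in $L^-$ and $q$ the number whose control member lies in $L^-$. Counting the treated units of $L^-$ gives $(\text{treated matched entirely within }L^-)+p=M$, and counting the control units gives $(\text{control matched entirely within }L^-)+q=M$; since a pair contained in $L^-$ has exactly one unit of each type, the two ``matched within'' counts coincide, forcing $p=q$. Therefore, whenever any pair crosses $\tau$, there exist a crossing pair $P_1=\{t_1,c_1\}$ with $\lambda_{t_1}<\tau<\lambda_{c_1}$ and another crossing pair $P_2=\{t_2,c_2\}$ with $\lambda_{c_2}<\tau<\lambda_{t_2}$.

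Now perform the exchange: replace $P_1,P_2$ by $P_1'=\{t_1,c_2\}$ and $P_2'=\{t_2,c_1\}$ and leave all other pairs unchanged; this is again a feasible collection of treated--control pairs. Writing $\lambda_{t_1}=\tau-s_1$, $\lambda_{c_2}=\tau-s_2$, $\lambda_{c_1}=\tau+u_1$, $\lambda_{t_2}=\tau+u_2$ with $s_1,s_2,u_1,u_2>0$ (no unit lies exactly at $\tau$, since there are no ties), the contribution of $P_1,P_2$ to the objective is $(s_1+u_1)+(s_2+u_2)$, whereas that of $P_1',P_2'$ is $|s_1-s_2|+|u_1-u_2|$, and $|s_1-s_2|+|u_1-u_2|<(s_1+s_2)+(u_1+u_2)$ because all four quantities are strictly positive. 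The strict decrease contradicts optimality of $\mathcal{M}_{opt}$, so no pair crosses $\tau$, and the lemma follows.

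I expect the main obstacle to be getting the bookkeeping exactly right in the counting step and making the covering assumption explicit: it is precisely the assumption that the $\mathcal{P}_i$ contain all matched units that makes every prefix balanced and lets a single threshold carry the argument. If one wanted the statement without that assumption (a pair could in principle run from $\mathcal{P}_i$ into a unit sitting in a ``gap'' between intervals), the same exchange idea still applies, but one must first localize to the smallest block of consecutive matched pairs whose scores bracket the offending pair and run the counting-and-swap argument inside it; this is more delicate but structurally identical.
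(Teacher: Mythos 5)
Your proposal is correct and takes essentially the same approach as the paper: a counting argument across a balanced cut (your $p=q$ step mirrors the paper's count of remaining treated versus controls in $\bigcup_{l\geq j}\mathcal{P}_l$) produces two oppositely-oriented pairs straddling the gap, and the resulting crossing configuration contradicts optimality. The only difference is that you inline the exchange inequality $|s_1-s_2|+|u_1-u_2|<(s_1+s_2)+(u_1+u_2)$, whereas the paper simply invokes Lemma~\ref{lem:nocross} from \citet{saevje2021inconsistency} that an optimal propensity score match contains no crossing matches.
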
 
\begin{proof}
	Suppose not.  Then we can show a crossing match exists, which is a contradiction by Lemma \ref{lem:nocross}.  Specifically, if the statement is false there exist two intervals $\mathcal{P}_i$,$\mathcal{P}_j$ with $i\neq j$ such that a treated unit in $\mathcal{P}_i$ (call this unit $t_1$) is matched to a control in $\mathcal{P}_j$ (call this unit $c_1$).  
	
	WLOG suppose that the subsets $\mathcal{P}_l$ are organized in increasing order of the propensity score values they contain, and that $i < j$; in addition, let $n_l$ represent the number of treated units (or controls) in interval $\mathcal{P}_l$. 
	
	Setting aside the previously-mentioned matched pair between a treated unit in $\mathcal{P}_i$ and a control in $\mathcal{P}_j$ and focusing on the  region $\bigcup^L_{l=j}\mathcal{P}_l$, there remain $\sum^L_{l=j}n_l$ unmatched treated units in this region and only $\sum^L_{l=j}n_l - 1$ unmatched controls.  Therefore, at least one treated unit in $\bigcup^L_{l=j}\mathcal{P}_l$ (call this unit $t_2$ will need to match to a control in $\bigcup^{j-1}_{l=1}\mathcal{P}_l$ (call this unit $c_2$).  Then 
	\begin{align*}
		\max\left\{\lambda(X_{t_1}), \lambda(X_{c_2})\right\} \leq \max_{i: \lambda(X_i) \in \bigcup^{j-1}_{l=1}\mathcal{P}_l}\lambda(X_i) <   \min_{i: \lambda(X_i) \in \bigcup^{L}_{l=j}\mathcal{P}_l}\lambda(X_i) \leq  \min\left\{\lambda(X_{t_2}), \lambda(X_{c_1})\right\}.
	\end{align*} 
	Therefore the pairs $(t_1,c_1)$ and $(t_2,c_2)$ create a crossing match and the proof is complete. 
\end{proof}

The next three lemmas are used in the proof of Proposition \ref{prop:unmatched}.  To exclude an incompatible $\mathbf{Z}'$ draw, it is both necessary and sufficient to identify at least one matched pair configuration that achieves a better objective value than the original match.  The first lemma establishes that  is sufficient to consider all ways to add exactly one unmatched control to the design, removing exactly one originally-matched control.
Let $\mathfrak{M}(\mathbf{X}, \mathbf{Z})$ represent the set of all possible pair matches for covariates $\mathbf{X}$ and treatment $\mathbf{Z}$, and for any $\mathcal{M} \in \mathfrak{M}(\mathbf{X}, \mathbf{Z})$ let $f(\mathcal{M})$ be its objective value; let $\mathcal{M}^*$ be some optimal match under the original treatment vector $\mathbf{Z}$.  For any $\mathbf{Z}' \in \mathcal{M}_{opt}$, observe that $\mathcal{M}^*$ is still a feasible match, since it pairs each treated unit in $\mathbf{Z}'$ to a control. Let $\mathfrak{M}^q_{\mathcal{M}^*}(\mathcal{X}, \mathbf{Z}')$ be the set of all pair matches under $\mathbf{Z}'$ in which the set of controls chosen differs from the set of controls in $\mathcal{M}^*$ (also under $\mathbf{Z}'$) by no more than $q$ subjects. 

\begin{lemma}
\label{lem:onesuff}
If $min_{\mathcal{M} \in \mathfrak{M}_{\mathcal{M}^*}^0(\mathcal{X}, \mathbf{Z}')}f(\mathcal{M}) = f(\mathcal{M}^*)$ (i.e. we cannot improve the match using the same set of matched controls), we have:
\[
\min_{\mathcal{M} \in \mathfrak{M}(\mathcal{X}, \mathbf{Z}')}f(\mathcal{M}) < f(\mathcal{M}^*) 
\quad \quad \text{iff} \quad \quad
\min_{\mathcal{M} \in \mathfrak{M}^1_{\mathcal{M}^*}(\mathcal{X}, \mathbf{Z}')}f(\mathcal{M}) < f(\mathcal{M}^*).
\]
\end{lemma}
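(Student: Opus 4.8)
The plan is to peel off the easy implication, reduce the hard one to a clean combinatorial statement about matching points on a line, and then prove that statement by an exchange argument combined with a pointwise convexity estimate.

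The reverse implication (right side of the equivalence implies left side) is immediate: any match whose control set differs from that of $\mathcal{M}^*$ by at most one subject is in particular a feasible match under $\mathbf{Z}'$, so $\min_{\mathcal{M}\in\mathfrak{M}(\mathcal{X},\mathbf{Z}')}f(\mathcal{M})\le\min_{\mathcal{M}\in\mathfrak{M}^1_{\mathcal{M}^*}(\mathcal{X},\mathbf{Z}')}f(\mathcal{M})$. For the forward implication I would first record two standard facts about matching on a univariate score under absolute-difference distance. Let $T$ be the set of treated units under $\mathbf{Z}'$, identified with their (distinct) propensity scores, and for a control set $S$ of size $|T|$ write $N_S(x)=|\{c\in S:\lambda_c\le x\}|$ and similarly $N_T$. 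Then (a) the cost of the best pairing of $T$ with $S$ equals $g(S):=\int_{\mathbb{R}}\lvert N_S(x)-N_T(x)\rvert\,dx$ — the monotone coupling is optimal, and its cost is the $L^1$ distance between the two counting CDFs; and (b) the hypothesis $\min_{\mathcal{M}\in\mathfrak{M}^0_{\mathcal{M}^*}}f(\mathcal{M})=f(\mathcal{M}^*)$ says precisely that $\mathcal{M}^*$ already uses the monotone pairing on its control set $C^*$, i.e.\ $f(\mathcal{M}^*)=g(C^*)$. If the left side of the equivalence holds, some feasible match under $\mathbf{Z}'$ beats $\mathcal{M}^*$; re-pairing it monotonically on its control set $C^{\sharp}$ gives $g(C^{\sharp})<g(C^*)$, hence $C^{\sharp}\neq C^*$.

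The core claim is then: if $g(C^{\sharp})<g(C^*)$, some single control swap (remove one $a\in C^*$, add one $b\notin C^*$) strictly decreases $g$ — which yields a match in $\mathfrak{M}^1_{\mathcal{M}^*}(\mathcal{X},\mathbf{Z}')$ with objective below $f(\mathcal{M}^*)$. Put $A=C^*\setminus C^{\sharp}$, $B=C^{\sharp}\setminus C^*$ (equal nonempty sizes), and let $\eta(x)=N_{C^{\sharp}}(x)-N_{C^*}(x)$, a step function that vanishes outside a bounded set, jumping $+1$ at each point of $B$ and $-1$ at each point of $A$. I would decompose $\eta$ into its maximal positive and negative excursions, and within each excursion match its $A$- and $B$-points in last-in/first-out (nested) order; concatenating gives a bijection $\pi:A\to B$ with $\eta=\sum_{a\in A}\rho_a$, where $\rho_a=\mathbf{1}[x\ge\pi(a)]-\mathbf{1}[x\ge a]$ is exactly the counting-function perturbation of the swap "remove $a$, add $\pi(a)$" — and, crucially, at every $x$ all nonzero $\rho_a(x)$ share the sign of $\eta(x)$, with exactly $\lvert\eta(x)\rvert$ of them nonzero.

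The finish is a pointwise convexity estimate: for real $y$ and integer $k$ (viewing $k$ as $\lvert k\rvert$ copies of $\pm1$ summing to $k$), convexity of $t\mapsto\lvert t\rvert$ gives $\lvert y+k\rvert-\lvert y\rvert\ge\lvert k\rvert\bigl(\lvert y+\operatorname{sgn}(k)\rvert-\lvert y\rvert\bigr)$. Taking $y=N_{C^*}(x)-N_T(x)$, $k=\eta(x)$ and integrating yields $g(C^{\sharp})-g(C^*)\ge\sum_{a\in A}\bigl(g(C^*_a)-g(C^*)\bigr)$ with $C^*_a:=(C^*\setminus\{a\})\cup\{\pi(a)\}$; since the left side is negative, some $g(C^*_{a_0})<g(C^*)=f(\mathcal{M}^*)$, and the monotone pairing of $T$ with $C^*_{a_0}$ is the desired element of $\mathfrak{M}^1_{\mathcal{M}^*}(\mathcal{X},\mathbf{Z}')$. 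I expect the main obstacle to be getting the bijection $\pi$ right: a naive pairing of $A$ with $B$ makes the pointwise inequality fail, because $+1$ and $-1$ perturbations can land on the same $x$; it is exactly the excursion-wise nested matching that forces the sign alignment needed to apply convexity. Checking the $L^1$ cost formula and the excursion decomposition carefully (half-open intervals, using that ties are excluded) is routine but necessary.
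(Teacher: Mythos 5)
Your proof is correct, but it takes a genuinely different route from the paper's. The paper argues combinatorially: it takes an optimal match $\mathcal{M}'$ under $\mathbf{Z}'$, overlays it with $\mathcal{M}^*$, decomposes the union graph into alternating treated--control chains with controls at both ends, notes that each nontrivial chain must be strictly cheaper under $\mathcal{M}'$, and then flips a single chain of $\mathcal{M}^*$, which changes the matched control set by exactly one unit. You instead exploit the univariate structure: writing the optimal cost for a control set $S$ as $g(S)=\int\lvert N_S-N_T\rvert\,dx$ (the sorted-pairing/$L^1$ identity, essentially the fact the paper cites from Ruzankin), you reduce the lemma to a local-search property of $g$ and prove it via the excursion-wise nested pairing of $A=C^*\setminus C^{\sharp}$ with $B=C^{\sharp}\setminus C^*$ together with the convexity bound $\lvert y+k\rvert-\lvert y\rvert\ge\lvert k\rvert\bigl(\lvert y+\operatorname{sgn}(k)\rvert-\lvert y\rvert\bigr)$, yielding the superadditivity inequality $g(C^{\sharp})-g(C^*)\ge\sum_a\bigl(g(C^*_a)-g(C^*)\bigr)$; I checked the sign alignment and depth-counting claims for the nested matching and the integration step, and they hold (distinct scores make the half-open-interval bookkeeping unproblematic). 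Comparing the two: the paper's exchange argument is shorter and purely graph-theoretic, so it does not rely on the distance being an absolute difference on a univariate score and would extend to general distances; your argument is tied to the one-dimensional setting but is quantitatively stronger (it bounds the total improvement by the sum of single-swap improvements), does not require the improving match to be chosen optimal, and uses the hypothesis $\min_{\mathfrak{M}^0_{\mathcal{M}^*}}f=f(\mathcal{M}^*)$ explicitly and cleanly via $f(\mathcal{M}^*)=g(C^*)$, whereas the paper's write-up handles the analogous same-control-set (cycle) components only implicitly.
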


\begin{proof}
	{Since $\mathfrak{M}^1_{\mathcal{M}^*}(\mathcal{X}, \mathbf{Z}') \subseteq \mathfrak{M}(\mathcal{X}, \mathbf{Z}')$, we have $\min_{\mathcal{M} \in \mathfrak{M}(\mathcal{X}, \mathbf{Z}')}f(\mathcal{M}) \geq
	\min_{\mathcal{M} \in \mathfrak{M}_{\mathcal{M}^*}^1(\mathcal{X}, \mathbf{Z}')}f(\mathcal{M})$ 
	Therefore, it is clear that if $\min_{\mathcal{M} \in \mathfrak{M}^1_{\mathcal{M}^*}(\mathcal{X}, \mathbf{Z}')}f(\mathcal{M}) < f(\mathcal{M}^*)$, we have $
	\min_{\mathcal{M} \in \mathfrak{M}(\mathcal{X}, \mathbf{Z}')}f(\mathcal{M}) < f(\mathcal{M}^*)$.}
	
{Next, we show the other direction. Suppose $\mathcal{M}^*$ is not optimal under $\mathbf{Z}'$ and denote the optimal match as $\mathcal{M}' \in  \mathfrak{M}(\mathcal{X}, \mathbf{Z}')$ (so $f(\mathcal{M}') < f(\mathcal{M}^*)$). Then we create a graph in which each treated unit is connected to a given control unig if it is matched to that control in either $\mathcal{M}'$ or $\mathcal{M}^*$. Since the two matches $\mathcal{M}^*$ and $\mathcal{M}'$ are not identical, there exists at least 1 connected part with more than 2 units. All connected parts with more than 2 units in this graph are chains of alternating treated and control units with controls on both ends (so the number of edges in each connected part is necessarily even and the number of controls in each connected part is the number of treated units in the same connected part plus 1). 
 Numbering the edges consecutively from either end, choosing only the odd edges gives the pair configuration for these units in one match and choosing only the even edges gives the pair configuration in the other.  For every such connected part, the objective value within it must be smaller in $\mathcal{M}'$ than in $\mathcal{M}^*$, otherwise $\mathcal{M}'$ would not be optimal under $\mathbf{Z}'$ since we could improve it by updating the match using the corresponding part in $\mathcal{M}^*$.  Therefore if we take match $\mathcal{M}^*$ and reverse only one of the connected parts with more than 2 units, we have improved the objective value while introducing only one new unmatched control, and the proof is complete.} 
\end{proof}

With Lemma \ref{lem:onesuff} established, the following lemma, proved originally by \citet{ruzankin2020fast}, simplifies our handling of large connected components by allowing us to restrict our attention to the largest and smallest propensity scores in each connected component when computing changes to objective value.

\begin{lemma}[\cite{ruzankin2020fast}]
	Suppose the data has the same number of treated and controls with propensity scores $\lambda_{l}^T$ and $\lambda_{l}^C$ respectively, $l=1,\dots,L$, such that $$\lambda_{1}^T\leq\lambda_{2}^T\leq\cdots\leq\lambda_{L}^T \quad \text{ and } \quad \lambda_{1}^C\leq\lambda_{2}^C\leq\cdots\leq\lambda_{L}^C.$$
	Then an optimal matching can be obtained by pairing the treated unit with propensity score $\lambda_{l}^T$ to the control with propensity score $\lambda_{l}^C$ for all $l=1,\dots,L$.
	\label{lem:optimal}
\end{lemma}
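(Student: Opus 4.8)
The plan is an exchange (``bubble-sort'') argument built on one elementary two-point inequality. Identify the proposed pairing with the identity permutation, under which treated unit $l$ is matched to control $l$ at total cost $\sum_{l=1}^{L}\lvert \lambda_l^T - \lambda_l^C\rvert$; an arbitrary pair match between the $L$ treated and $L$ control units corresponds to a permutation $\sigma$ of $\{1,\dots,L\}$ with cost $f(\sigma)=\sum_{l=1}^{L}\lvert \lambda_l^T - \lambda_{\sigma(l)}^C\rvert$, and it suffices to show $f(\mathrm{id})\le f(\sigma)$ for every $\sigma$ (passing from a sum to an average rescales the objective by $1/L$ and does not affect the argmin).

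The engine is the inequality: for reals $a\le b$ and $c\le d$, $\lvert a-c\rvert+\lvert b-d\rvert\le\lvert a-d\rvert+\lvert b-c\rvert$. I would establish this by noting that $t\mapsto \lvert b-t\rvert-\lvert a-t\rvert$ is nondecreasing when $a\le b$, so evaluating at $c\le d$ gives $\lvert b-c\rvert-\lvert a-c\rvert\le\lvert b-d\rvert-\lvert a-d\rvert$, which rearranges to the claim; a short case analysis on the interleaving of $\{a,b\}$ and $\{c,d\}$ works equally well. Crucially this holds with weak inequality even when some of the four values coincide, so ties among propensity scores create no difficulty (alternatively, ties may be broken by an infinitesimal perturbation, which does not change the limiting set of optimal matches).

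Now take any $\sigma\neq\mathrm{id}$. Since $\sigma$ is a permutation of distinct indices that fails to be monotone, there is an index $l$ with $\sigma(l)>\sigma(l+1)$. Apply the two-point inequality with $a=\lambda_l^T\le b=\lambda_{l+1}^T$ and $c=\lambda_{\sigma(l+1)}^C\le d=\lambda_{\sigma(l)}^C$: the contributions of positions $l,l+1$ to $f(\sigma)$ equal $\lvert a-d\rvert+\lvert b-c\rvert$, whereas replacing $\sigma$ by $\sigma'=\sigma\circ(l,l+1)$ changes these to $\lvert a-c\rvert+\lvert b-d\rvert$ and leaves every other term untouched, so $f(\sigma')\le f(\sigma)$. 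Composing with an adjacent transposition removes exactly one inversion, so iterating this step from $\sigma$ reaches $\mathrm{id}$ in finitely many moves without ever increasing the cost; hence $f(\mathrm{id})\le f(\sigma)$, as claimed. (An alternative route: by Lemma~\ref{lem:nocross} an optimal match has no crossing match, and when propensity scores are distinct the sorted pairing is the unique crossing-free perfect matching, hence optimal; the exchange argument is preferable as it is self-contained and handles ties transparently.)

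No step poses a genuine obstacle, since all the content sits in the two-point inequality, which is the pairwise form of the rearrangement inequality for $L^1$. The only mildly delicate points are bookkeeping ones: selecting an \emph{adjacent} inversion so that a single transposition strictly lowers the finite inversion count and thereby forces termination, and checking that the argument runs verbatim under ties, both of which are dispatched by the observations above.
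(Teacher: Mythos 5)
Your exchange argument is sound, and it is worth noting that the paper itself offers no proof of Lemma~\ref{lem:optimal}: it simply cites \citet{ruzankin2020fast}. So your contribution is a self-contained verification rather than a variant of an argument in the text, and the core of it — reduce to permutations, remove adjacent inversions one at a time using the two-point inequality $|a-c|+|b-d|\le|a-d|+|b-c|$ for $a\le b$, $c\le d$, and observe that each adjacent transposition lowers the inversion count by exactly one so the process terminates at the identity without increasing cost — is exactly the standard rearrangement proof and is correct, ties included.

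Two small blemishes, neither fatal. First, the map $t\mapsto |b-t|-|a-t|$ with $a\le b$ is \emph{nonincreasing}, not nondecreasing (it equals $b-a$ for $t\le a$ and $a-b$ for $t\ge b$); with the correct direction, evaluating at $c\le d$ gives $|b-c|-|a-c|\ge |b-d|-|a-d|$, which is what rearranges to the claimed inequality, whereas the inequality you wrote rearranges to its reverse. Your fallback case analysis covers this, so only the wording needs fixing. Second, the parenthetical alternative is wrong as stated: when scores are distinct, the sorted pairing is \emph{not} the unique crossing-free matching under the paper's definition preceding Lemma~\ref{lem:nocross}. For example, with treated scores $0,4$ and control scores $6,10$, the nested pairing $\{(0,10),(4,6)\}$ contains no crossing match yet is not the sorted pairing (here it happens to tie in cost, which is consistent with the lemma only asserting that the sorted pairing is \emph{an} optimum). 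Since you explicitly prefer the exchange argument, this does not affect your proof, but the uniqueness claim should be deleted or weakened.
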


\subsection{Proof of Proposition \ref{prop:condsupport}}\label{app:proof1}

\begin{center}
\noindent\fbox{%
\vspace{2mm}
\parbox{\textwidth}{%
\vspace{2mm}
\textcolor{black}{Suppose that the initial match left no unmatched units.  Then for any compatible permutation $\mathbf{Z}'$ of $\mathbf{Z}$ generated by the algorithm, $\mathbb{P}( {\mathcal{M}_{opt}, {\mathcal{U}_{opt}}}\mid \mathbf{Z}', \mathcal{F}) =1$, and for any permutation $\mathbf{Z}'$ rejected, $\mathbb{P}( {\mathcal{M}_{opt}, {\mathcal{U}_{opt}}}\mid \mathbf{Z}', \mathcal{F}) =0$. }
}
}
\end{center}
\begin{proof}

First note that it suffices to determine whether the original match based on $\mathbf{Z}$ encoded by $\mathcal{M}_{opt}$ is still an optimal propensity score match or whether under $\mathbf{Z}'$ there exists some other pairing with a strictly lower objective value.

Represent the set of units in each connected component $r$ identified by Algorithm 1 as $\mathcal{S}_r$, and let $\mathcal{Q}_r = \left[\min_{i: i \in \mathcal{S}_r}\lambda(X_i), \max_{i: i \in \mathcal{S}_r}\lambda(X_i)\right]$.  By construction of the connected components, the $\mathcal{Q}_r$ are all non-overlapping closed intervals.  Under any permutation of $\mathbf{Z}$ {within pairs} $\mathcal{M}_{opt}$, the number of treated units in each connected component remains fixed and equal to the number of control units; therefore by Lemma~\ref{lem:partition}, any optimal propensity score matching must pair individuals only within the same connected component.

Now consider the subvectors of $\mathbf{Z}'$ and $\mathbf{Z}$ in a given connected component $\mathcal{S}_r$; call them $\mathbf{Z}_r'$ and $\mathbf{Z}_r$ respectively.  The previous paragraph tells us that the original match is still optimal if and only if within each connected component $k$, the treatment assignment $\mathbf{Z}_r'$ does not lead to a better within-component match than treatment assignment $\mathbf{Z}_r$.  To determine whether this is true for a given $\mathbf{Z}_r'$, notice that there are three possible cases:
\begin{enumerate}
	\item $\mathbf{Z}_r' = \mathbf{Z}_r$ (no treatment assignments in $\mathbf{Z}_r$ are switched by $\mathbf{Z}_r'$).
	\item  $\mathbf{Z}_r' = \mathbf{1}- \mathbf{Z}_r$ (all treatment assignments in $\mathbf{Z}_r$ are switched by $\mathbf{Z}_r'$).
	\item $\mathbf{Z}_r' \neq \mathbf{Z}_r$, $\mathbf{Z}_r'\neq \mathbf{1} - \mathbf{Z}_r$ (some but not all treatment assignments in $\mathbf{Z}_r$ are switched by $\mathbf{Z}_r'$).
\end{enumerate}
{Suppose first that case 3 holds for any $r$.  
Sort the pairs in $\mathcal{S}_r$ by the propensity score of the treated unit. Since not all treatment assignments are switched, there exits $i$ such that pair $i$ has switched treatment assignment and it has an adjacent pair $j$ with unswitched treatment assignment. Then these two pairs form a crossing match and therefore the match as a whole cannot be an optimal propensity score match.}

Now suppose that in every connected component $\mathcal{S}_r$, either case 1 or case 2 holds for subvectors $\mathbf{Z}_r', \mathbf{Z}_r$. Within each component, the match is still optimal because the solution to the optimal propensity matching is invariant to which group is labeled treated and  which labeled control when groups are equal in size.  Furthermore, by Lemma~\ref{lem:partition}, since the match within each connected component is optimal, the overall match is optimal.

To complete the proof, it suffices to observe that the set of permutations allowed by Algorithm 1 is exactly the set for which either case 1 or case 2 holds in each connected component.

\end{proof}


\subsection{Proof of Proposition~\ref{prop:unmatched}} \label{app:proof2}
\begin{center}
\noindent\fbox{%
\vspace{2mm}
\parbox{\textwidth}{%
\vspace{2mm}
For any permutation $\mathbf{Z}'$ of $\mathbf{Z}$ rejected by Algorithm 2, $\mathbb{P}( {\mathcal{M}_{opt}, {\mathcal{U}_{opt}}} \mid \mathcal{F},  \mathbf{Z}') = 0$, and for any permutation $\mathbf{Z}'$ not rejected, $\mathbb{P} ({\mathcal{M}_{opt}, {\mathcal{U}_{opt}}}\mid \mathcal{F},  \mathbf{Z}') =1$.}
}
\end{center}


\begin{proof}
	First note that since Algorithm 2 accepts only $\mathbf{Z}'$ draws also accepted by Algorithm 1, by Proposition \ref{prop:condsupport} there exists no better pairing among the units included in the original match.  In addition, by Lemma \ref{lem:onesuff}, if a better pairing exists using multiple unmatched controls, one also exists using only one unmatched control. Therefore, to establish the result it is sufficient to show that for any $\mathbf{Z}'$ accepted by Algorithm 2,  no better match among those incorporating exactly one of the originally-unmatched controls exists, and that for any $\mathbf{Z}'$ not accepted, some better match  incorporating exactly one of the originally-unmatched controls exists.

Suppose that one such originally-unmatched control $c^*$ now enters the match, improving the objective function.  Without loss of generality, let the new match be the best of all possible such one-step improvements (i.e. a minimizer of objective function $f(\cdot)$ on $\mathfrak{M}_{m^*}^1(\mathcal{X}, \mathbf{Z}')$ in the language of  Lemma \ref{lem:onesuff}).  
 $c^*$ must be adjacent to the meta-component containing the treated unit $t$ to which it is now paired, i.e. no other originally-unmatched control can lie between $c^*$ and that meta-component; otherwise we could improve the match by pairing $t$ to this other control instead. 
 
 Take any such $c^*$ as given.  We define the \emph{chain} associated with the new match incorporating $c^*$ (and the particular value $\mathbf{Z}'$ under which it is constructed) as a sequence of propensity scores $(\lambda_1, \lambda_2, \ldots, \lambda_{2K+1})$ such that each odd entry is the propensity score for a distinct control unit under $\mathbf{Z}'$, each even entry is the propensity score for a distinct treated unit under $\mathbf{Z}'$,  
  $\lambda_1 = \lambda_{c^*}$, $(\lambda_{2i}, 
  \lambda_{2i+1})$ are paired under the original match for all $i \in 1, \ldots K$, and $(\lambda_{2i-1}, \lambda_{2i})$ are paired under the new match for all $i \in 1, \ldots K$.  We define a sub-chain as any chain that can be obtained by taking only the first $2K' +1$ entries of the chain with $K' \leq K$; any larger chain for which the current chain is a subchain is referred to as a superchain.  We also define the value of a chain as:
  \[
  \sum^K_{i=1}|\lambda_{2i+1} - \lambda_{2i}| -   \sum^K_{i=1}|\lambda_{2i-1} - \lambda_{2i}| 
  \]
 Note that possible new matches incorporating $c^*$ are bijectively associated with chains, and that the value of the chain encodes the change in objective value associated with choosing the new match over the original match.  As such, to check if $\mathbb{P}( {\mathcal{M}_{opt}, {\mathcal{U}_{opt}}} \mid \mathcal{F}, \mathbf{Z}') = 1$, it suffices  to show that the minimum value over all possible chains is nonnegative, and to show $\mathbb{P}( {\mathcal{M}_{opt}, {\mathcal{U}_{opt}}} \mid \mathcal{F}, \mathbf{Z}') = 0$ it suffices to find one chain with negative value.  
 
Next, we prove that whenever a chain with negative value exists, one must exist that changes pairs only within a single meta-component.  Consider any negative-value chain $\theta$ that changes pairs within multiple meta-components.  Let $\theta_s$ be the largest sub-chain of $\theta$ that changes pairs only within a single meta-component $i_1$, so that the last control unit in $\theta_s$ is matched to a treated unit in a different meta-component $i_2$ in $\theta$.  If $\theta_s$ has negative value, the claim holds.  If instead $\theta_s$ has nonnegative value, we can create a new chain $\theta_t$ with (negative) value no greater than that of $\theta$ by beginning with an unmatched control nearest to meta-component $i_2$ and then following the same sequence of units that followed subchain $\theta_s$ in chain $\theta$.  If $\theta_t$ still incorporates multiple meta-components, the process can be repeated until a negative-value chain is produced that is contained within a single meta-component.  Therefore we consider only chains within a single meta-component for the remainder of the proof.

Now consider Algorithm 2.  The workhorse functions are CheckComponentUp and CheckComponentDown.  These functions each operate on a particular component $j$ in a particular meta-component $i$ for a particular vector of treatment assignments $\mathbf{w}$ in that meta-component (encoded as component-level treatments), and take as inputs two arguments $C$ and $D$ giving respective running costs for a new match and the original match.  As we will show, underlying any call to either function is an implicit chain $\theta_{j-1}$, which has value $C-D$, and which each function seeks to extend into a larger chain (by adding units from the component under consideration) so as to minimize the value of the final chain.  To streamline our analysis of this process, we define some additional quantities.  We let the treatment subvector $\mathbf{w}(j) \in \{0,1\}^j$ encode the treatments for the components numbered 1 through $j$ in meta-component $i$, and we let the completion set $\mathcal{C}(\mathbf{w}(j))$ be the set of all $\mathbf{w} \in \{0,1\}^{|\mathcal{T}_i|}$ whose first $j$ values are equal to the corresponding values in $\mathbf{w}_j$.  Let the index of $\theta_{j-1}$'s final control unit be $c^*_{j-1}$ (with associated propensity score $\lambda_{c^*_{j-1}})$.  Let the completion set  $\mathcal{C}(\theta_{j-1}, \mathbf{w})$ be the set of all superchains of $\theta_j$ incorporating only units within meta-component $i$ under treatment assignment $\mathbf{w}$.   We also let the $(j-1)$-restricted completion set $\mathcal{C}_{j-1}(\theta_{j-1}, \mathbf{w})$ be the subset of $\mathcal{C}(\theta_{j-1}, \mathbf{w})$ that is the union of $\{\theta_{j-1}\}$ and the set of all superchains of $\theta_{j-1}$ such that the treated unit immediately following $c^*_{j-1}$ does not lie in any of components $1, \ldots, j-1$.

We prove the theorem largely by an inductive argument on calls to CheckComponentUp, which consider for each meta-component the nearest unmatched control $c^*$ with propensity score smaller than those of units in the meta-component (an almost-identical argument, which we omit in the interest of concision, is also needed for CheckComponentDown to consider the nearest unmatched control with larger propensity score for each meta-component).  In particular, consider the following inductive hypothesis $A_j(\mathbf{w})$ for $j = 0, \ldots, |\mathcal{T}_i|+1$:

\begin{adjustwidth}{2cm}{}
$A_j(\mathbf{w})$: For any call to CheckComponentUp with component index $j$ and input treatment vector $\mathbf{w}'$ such that $\mathbf{w}'(j-1) = \mathbf{w}(j-1)$ , the following statements are true:
\begin{enumerate}
\item If $j = |\mathcal{T}_i| + 1$, no negative-value chain beginning with unit $c^*$ is possible when its units are assigned treatments according to $\mathbf{w}$.  
\item If $j \leq |\mathcal{T}_i|$ but the function exits without recursing further on treatment subvectors of $\mathbf{w}$, then a negative-value chain exists whenever its units are  assigned treatments according to $\mathbf{w}$.
\item For any new call to CheckComponentUp this instance of CheckComponentUp makes with treatment subvector $\mathbf{w}(j)$ and chain $\theta_j$, the value of $\theta_j$ is non-negative and either:
\begin{enumerate}[label=\alph*.]
\item   
the minimal value over chains in $\mathcal{C}_j(\theta_{j}, \mathbf{w})$ is no greater than the minimal value over chains in $\mathcal{C}(\theta_{j-1}, \mathbf{w})$, or
\item the minimal value over chains in $\mathcal{C}(\theta_{j-1}, \mathbf{w})$ is nonnegative.
\end{enumerate}
\end{enumerate}
\end{adjustwidth}

We start by proving the trivial base case $A_0(\mathbf{w})$.  For $j=0$, case 1 can be ignored and since we have not yet made a call to CheckComponentUp we can also ignore case 2; case 3 holds similarly because the recursive call is simply the initial call to CheckComponentUp and both $C(\theta_{j-1},\mathbf{w})$ and $C_j(\theta_j, \mathbf{w})$ refer to the set of all chains beginning with $c^*$ under this $\mathbf{w}$.

Now we will prove $A_j(\mathbf{w})$ assuming $A_0(\mathbf{w}), A_1(\mathbf{w}), \ldots, A_{j-1}(\mathbf{w})$ all hold.  First consider case 1, under which  $\mathbf{w}(j-1) = \mathbf{w}$ and $\mathcal{C}_j(\theta_{j},\mathbf{w})$ is the set consisting of $\theta_{j-1}$ alone.   From $A_{j-1}(\mathbf{w})$ we know that $\theta_{j-1}$ has nonnegative value, so we also know that the minimal value over chains in $\mathcal{C}(\theta_{j-1}, \mathbf{w})$ is nonnegative (since the minimal value over chains in $\mathcal{C}_j(\theta_{j}, \mathbf{w})$ is simply the value of $\theta_{j-1}$).
By considering the earlier inductive hypotheses in turn, we know also that $\theta_{j-1}$'s value is no greater than the minimal value across any of the sets  $\mathcal{C}(\theta_{j-2}, \mathbf{w})$, \ldots,  $\mathcal{C}(\theta_{0}, \mathbf{w})$, or that these sets themselves have nonnegative minimal value.  The last set is simply the set of all chains beginning with $c^*$.  This establishes case 1.
 
 Next we consider case 3.    It must be the case that 
 all overlapping pairs in the same connected component $\mathcal{S}_r$ share the same relative ordering for the treated and control propensity scores (denoted as $\lambda_k^T$ and $\lambda_k^C$ respectively). 
Specifically, either
$\lambda_k^T>\lambda_k^C$ holds for all $k\in\mathcal{S}_r$ or
$\lambda_k^T<\lambda_k^C$ is true for all $k\in\mathcal{S}_r$. Any deviation from this pattern leads to the existence of crossing matches, contradicting the optimality of the original match \citep[Lemma~3]{saevje2021inconsistency}.

 When treated units are larger than their matched counterparts in component $j$, CheckComponentUp implicitly sets $\theta_j = \theta_{j-1}$ (so that the input values $C$ and $D$ are unchanged in the recursive call).  Consider any other superchain $\theta'$ of $\theta_{j-1}$ consisting entirely of units in the first $j$ components.  By inductive hypothesis $A_{j-1}(\mathbf{w})$ we know that to identify a negative-value superchain (if it exists) it is sufficient to consider $\theta' \in \mathcal{C}_{j-1}(\theta_{j-1}, \mathbf{w})$, so when $j > 1$ we can ignore superchains that begin by matching $c^*_{j-1}$ to any treated unit in components $1$ through $j-1$ 
 Furthermore, if $j > 1$, $\theta'$ first extends $\theta_{j-1}$ by matching $c^*_{j-1}$ to any treated unit $t$ in $j$, and if at any later point in the extended chain
 some control $c$ in component $j$ is matched to a treated unit in one of components $1, \ldots, j-1$, a crossing match is created with the pair $(c^*_{j-1},t)$ and neither this chain nor any of its superchains can achieve the minimal value over $\mathcal{C}_{j-1}(\theta_{j-1}, \mathbf{w})$, since a strictly better match (corresponding to a different superchain) can be constructed instead.  Therefore, any value $\theta'$  that can achieve the minimal value must extend beyond $\theta_{j-1}$ only by changing pairs only within component $j$, and must therefore end with some control $c'$ in component $j$.  Numbering the treated units in component $j$ as $t_1, \ldots, t_{n_j}$ by increasing propensity score and the control units in component $j$ as $c_1, \ldots, c_{n_j}$ with $c_k = c'$, we can obtain a lower bound on the difference between the values of $\theta_{j-1}$ and any $\theta' \neq \theta_{j-1}$ meeting the minimization condition using  Lemmas \ref{lem:partition} and \ref{lem:optimal} as follows:
 \begin{align*}
 \left[\lambda_{t_1} - \lambda_{c^*_{j-1}} +  \sum^{k}_{i=1}(\lambda_{t_{i+1}} - \lambda_{c_i}) + \sum^{n_j}_{k + 1}(\lambda_{t_{i}} - \lambda_{c_i})\right] - \sum^{n_j}_{i=1}&(\lambda_{t_i} - \lambda_{c_i})  &= \lambda_{c_k} - \lambda_{c^*_{j-1}} \geq 0.
 \end{align*}
 Now consider any superchain of such a $\theta'$ ending in $c_k$.  Since the value of $\theta'$ exceeds the value of any $\theta_{j-1}$ by $\lambda_{c_k} - \lambda_{c^*_{j-1}}$, we can produce a chain with equal or lesser value by extending $\theta_{j-1}$ using the same sequence of units.  Therefore, any chain value that can be achieved by a chain in $\mathcal{C}(\theta_{j-1}, \mathbf{w})$ can also be achieved by a chain in $\mathcal{C}_j(\theta_j, \mathbf{w})$ with $\theta_j = \theta_{j-1}$.

 When instead treated units are paired to controls with larger propensity score values in component $j$, CheckComponentUp 
first extends  $\theta_{j-1}$  by  matching $c^*_{j-1}$ to the treated unit $t_{\min}(j)$ with smallest propensity score in component $j$, and the other matches are reconfigured so that the largest control unit in this component $c_{\max}(j)$ is left unmatched to any treated unit in the component (i.e. it is the last unit in the chain).  To determine the change in chain value driven by this extension, we number the treated units in component $j$ as $t_1, \ldots, t_{n_j}$ by increasing propensity score and the control units in component $j$ as $c_1, \ldots, c_{n_j}$ by increasing propensity score; by Lemmas \ref{lem:partition} and \ref{lem:optimal}  we obtain the following value for the change:
\begin{align*}
&= \sum^{n_j}_{i=1}(\lambda_{ci} - \lambda_{ti}) - \left[ \sum^{n_j}_{i=2}(\lambda_{c_{i-1}} - \lambda_{t_i} )  + (\lambda_{t_1} - \lambda_{c^*_{j-1}})\right]\\
&= \lambda_{c_{n_j}} - \lambda_{t1} - (\lambda_{t1} - \lambda_{c^*_{j-1}}).
\end{align*}
In the first line we use the fact that $\lambda_{c(i-1)}  > \lambda_{ti}$ (otherwise we can show that not all units can be part of the same connected component $j$). Compare this chain $\theta_{j}$ to any other superchain $\theta'$ of $\theta_{j-1}$ consisting entirely of units in components 1 through $j$.  Hypothesis $A_{j-1}(\mathbf{w})$ is true, so either all superchains of $\theta_{j-1}$ under $\mathbf{w}$ have nonnegative value (which suffices for case 3) or it to find the minimum value of all superchains sufficient to consider members of $\mathcal{C}_{j-1}(\theta_{j-1}, \mathbf{w})$.  First suppose
  $\theta' \neq \theta_{j-1}$; then $c^*_{j-1}$ must be matched to $t_1$ if $\theta'$ achieves the minimal value among superchains of $\theta_{j-1}$ otherwise a crossing match is formed between the pairs involving $c^*_{j-1}$ and $t_1$.   If $j > 1$ and some control unit in component $j$ is matched to a treated unit in components 1 through $j-1$ as part of $\theta'$, another crossing match is formed between this pair and the pair $(c^*_{j-1}, t_1)$, so it is suffiicent to consider superchains $\theta'$ that use only units within component $j$.  Finally, note that if any control $c_k$ besides $c_{n_j}$ is the last control in $\theta'$, then the new match associated with this chain cannot be optimal, since it leaves an unmatched control in the middle of at least one treated-control pair (otherwise not all units in component $j$ can be part of the same connected component), and any further extensions of $\theta'$ create a crossing match with the pair involving $c_{n_j}$.
If alternatively,  $\theta' = \theta_{j-1}$, i.e. the current chain is not extended using any units in this component, then either nonnegative-valued $\theta_{j-1}$ is itself the minimal-value chain  in $\mathcal{C}(\theta_{j-1}, \mathbf{w})$, or there exists some chain $\theta''$ minimizing value over $\mathcal{C}(\theta_{j-1}, \mathbf{w})$ that extends $\theta_{j-1}$ by first matching to a treated unit $t$ outside components $1, \ldots, j$ and achieves a lower value than any chain in $C_j(\theta_j, \mathbf{w}$.  But this is a contradiction, since we can construct a chain in  $C_j(\theta_j, \mathbf{w})$ with no greater value than $\theta''$ by replacing the pair $(c^*_{j-1},t)$ with the sequence by which $\theta_j$ extends $\theta_{j-1}$, and the pair $(c_{n_j}, t)$ (by the argument given above about why optimal chains do not match to treated units in earlier components, we know these units are not already part of $\theta''$). The difference in chain value between $\theta''$ and this new chain is equal to:
\begin{align*}
\lambda_{t} &- \lambda_{c^*_{j-1}} - \left[\lambda_t - \lambda_{c_{n_j}} + \lambda_{c_{n_j}} - \lambda_{t1} - (\lambda_{t1} - \lambda_{c^*_{j-1}}) \right] = 2\lambda_{t1} - 2\lambda_{c^*_{j-1}} \leq 0
\end{align*}
Therefore, either $\mathcal{C}(\theta_{j-1},\mathbf{w})$ has non-negative minimal value or it has minimal value no greater than the minimal value of $\mathcal{C}_j(\theta_j, \mathbf{w})$ and case 3 is proven.

Finally, consider case 2.  If the function exits without recursing further on subvectors of $\mathbf{w}$, this can only be because component $j$ has treated units smaller than control units in each matched pair, and the extended $\theta_j$ constructed as discussed in the previous paragraph achieves a negative value, so the condition holds.

Therefore, by induction, for any $\mathbf{w}$, $A_j(\mathbf{w})$ holds for all $j = 1, \ldots, |\mathcal{T}_i|+1$.  A similar set of inductive hypotheses can be shown for the function CheckComponentDown, which checks all chains for a meta-component beginning with the nearest unmatched control with larger propensity score.
 By our reasoning above and the bijection between negative chains and alternative matches with improved objective functions within a meta-component, this means that the treatment vectors approved by both CheckComponentDown and CheckComponentUp are exactly the meta-component treatment assignments for which no better within-meta-component match exists. Therefore Algorithm 2 allows exactly the set of full-dataset treatment vectors for which no meta-component admits an improved match.  Since the overall match can be improved if and only if the match in some meta-component can be improved, the proof is complete.

\end{proof}

\subsection{Proof of Proposition \ref{prop:typeI}} \label{app:proof3}

\begin{center}
\noindent\fbox{%
\vspace{2mm}
\parbox{\textwidth}{%
\vspace{2mm}
Suppose that $Y_{ki}(1)= Y_{ki}(0)$ for all $k,i$.  Then: 
\[
P[T(\mathbf{Z},\mathbf{Y}) > t(\alpha) \mid \mathcal{F}, {\mathcal{M}_{opt}, {\mathcal{U}_{opt}}}] \leq \alpha \quad \quad \forall \alpha \in (0,1).
\]
}
}
\end{center}
\begin{proof}
First, note that it suffices to show that $\mathbf{Z}$ and $\mathbf{Z}'$ are identically distributed conditional on $(\mathcal{F},{\mathcal{M}_{opt}, {\mathcal{U}_{opt}}})$. For any {$\mathbf{z} \in \{0,1\}^N$ (where $N$ is defined as in Section \ref{subsec:setup})}, 
the following is true:
\begin{align*}
\mathbb{P}(\mathbf{Z} =\mathbf{z}  \mid {\mathcal{M}_{opt}, {\mathcal{U}_{opt}}}, \mathcal{F}) 
&= \frac{\mathbb{P}({\mathcal{M}_{opt}, {\mathcal{U}_{opt}}} \mid \mathbf{Z} = \mathbf{z}, 
\mathcal{F})\mathbb{P}(\mathbf{Z}=\mathbf{z} \mid 
\mathcal{F})}{\mathbb{P}({\mathcal{M}_{opt}, {\mathcal{U}_{opt}}}\mid 
\mathcal{F}) }\\
&= \frac{\mathbb{P}({\mathcal{M}_{opt}, {\mathcal{U}_{opt}}}\mid \mathbf{Z} = \mathbf{z}, 
\mathcal{F})\mathbb{P}(\mathbf{Z}=\mathbf{z} \mid 
\mathcal{F})}{
\sum_{\tilde{\mathbf{z}}\in {\{0,1\}^N}}\mathbb{P}({\mathcal{M}_{opt}, {\mathcal{U}_{opt}}} \mid \mathbf{Z} = \tilde{\mathbf{z}}, 
\mathcal{F})\mathbb{P}(\mathbf{Z} = \tilde{\mathbf{z}}\mid 
\mathcal{F})}.
\end{align*}
Since $\mathcal{M}_{opt}$ {and $\mathcal{U}_{opt}$ are deterministic functions} of $\mathbf{Z}$ 
and $\mathcal{F}$, the conditional probability statements {about $\mathcal{M}_{opt}, {\mathcal{U}_{opt}}$ }
 are all equal to one or zero.  Therefore, we can simplify as follows:
\begin{align*}
\mathbb{P}(\mathbf{Z} =\mathbf{z}  \mid {\mathcal{M}_{opt}, {\mathcal{U}_{opt}}}, \mathcal{F}) &= \left\{ \begin{array}{cc} 
\frac{\mathbb{P}(\mathbf{Z}=\mathbf{z} \mid \mathcal{F})}{
\sum_{\tilde{\mathbf{z}}: \mathbb{P}({\mathcal{M}_{opt}, {\mathcal{U}_{opt}}} \mid \mathbf{Z} = {\widetilde{\mathbf{z}}}, 
\mathcal{F}) = 1}\mathbb{P}(\mathbf{Z} = \tilde{\mathbf{z}}\mid 
\mathcal{F})} & \mathbb{P}({\mathcal{M}_{opt}, {\mathcal{U}_{opt}}} \mid \mathbf{Z} = \mathbf{w}, 
\mathcal{F}) = 1 \\
0 & \mathbb{P}({\mathcal{M}_{opt}, {\mathcal{U}_{opt}}}\mid \mathbf{Z} = \mathbf{w}, 
\mathcal{F}) = 0. \end{array} \right.
\end{align*}
By Proposition \ref{prop:unmatched}, Algorithm 2 identifies exactly the set of permuted treatment vectors $\tilde{\mathbf{z}}$ 
such that the right-hand side is positive.  Furthermore, by results in \citet{pimentel2023covariate} and construction of the permutation probabilities using true propensity scores, {among these vectors $\widetilde{\mathbf{z}}$ we have} $\mathbb{P}(\mathbf{Z} = \tilde{\mathbf{z}} \mid 
\mathcal{F}) = \mathbb{P}(\mathbf{Z}' = \tilde{\mathbf{z}} \mid 
\mathcal{F})$, 
 so we may replace the former type of probability statements  in the right-hand side with the latter.  This suffices to show that that $\mathbf{Z}$ and $\mathbf{Z}'$ are identically distributed conditional on $({\mathcal{M}_{opt}, {\mathcal{U}_{opt}}}, \mathcal{F})$, and the proof is complete.
\end{proof}

\end{document}